\newtheorem{prop}{Proposition}
\begin{document}

\begin{titlepage}

\null\vfill
\noindent This article has been accepted by IEEE Transactions on Vehicular Technology.

\vspace{2em}

\noindent Copyright (c) 2015 IEEE. Personal use of this material is permitted. However, permission to use this material for any other purposes must be obtained from the IEEE by sending a request to pubs-permissions@ieee.org.
\vspace{2em}

\noindent Z. Ye, C. Pan, H. Zhu and J. Wang,``Tradeoff Caching Strategy of Outage Probability and Fronthaul Usage in Cloud-RAN,'' in \emph{IEEE Transactions on Vehicular Technology}, vol. PP, no. 99, pp. 1-1.

\vspace{2em}
\noindent doi: 10.1109/TVT.2018.2797957

\vspace{2em}
\noindent \textcolor{blue}{https://doi.org/10.1109/TVT.2018.2797957}
\vfill\vfill\vfill\vfill
\end{titlepage}

\title{Tradeoff Caching Strategy of Outage Probability and Fronthaul Usage in Cloud-RAN}

\author{Zhun Ye,~\IEEEmembership{Member,~IEEE}, Cunhua Pan,~\IEEEmembership{Member,~IEEE}, 
	
	Huiling Zhu,~\IEEEmembership{Senior Member,~IEEE}, and	Jiangzhou Wang,~\IEEEmembership{Fellow,~IEEE}
	
	\thanks{Copyright (c) 2015 IEEE. Personal use of this material is permitted. However, permission to use this material for any other purposes must be obtained from the IEEE by sending a request to pubs-permissions@ieee.org.
		
		Part of this work was presented in IEEE International Conference on Communications (ICC), Paris, 2017 \cite{myICC}.
		
		This work was supported by the China Scholarship Council (CSC), the Fundamental Research Funds of Shandong University (No. 2015ZQXM008), a Marie Curie International Outgoing Fellowship within the 7th European Community Framework Programme under the Grant PIOFGA-2013-630058 (CODEC), the UK Engineering and Physical Sciences Research Council under the Project EP/L026031/1 (NIRVANA), and the Framework of EU Horizon	2020 Programme under the Grant 644526 (iCIRRUS). 
		
	    Zhun Ye is with the School of Mechanical, Electrical and Information Engineering, Shandong University, Weihai, Shandong Province, 264209, P. R. China, e-mail: zhunye@sdu.edu.cn
	
	    Cunhua Pan is with the School of Electronic Engineering and Computer Science, Queen Mary University of London, London, E1 4NS, United Kingdom, Email: c.pan@qmul.ac.uk
		
        Huiling Zhu and Jiangzhou Wang are with the School of Engineering and Digital Arts, University of Kent, Canterbury, Kent, CT2 7NT, United Kingdom, e-mail: \{h.zhu, j.z.wang\}@kent.ac.uk}
}

{}

\maketitle

\begin{abstract}
	In this paper, tradeoff content caching strategy is proposed to jointly minimize the cell average outage probability and fronthaul usage in cloud radio access network (Cloud-RAN). At first, an accurate closed form expression of the outage probability conditioned on the user's location is presented, and the cell average outage probability is obtained through the composite Simpson's integration. The caching strategy for jointly optimizing the cell average outage probability and fronthaul usage is then formulated as a weighted sum minimization problem, which is a nonlinear 0-1 integer problem. Two heuristic algorithms are proposed to solve the problem. Firstly, a genetic algorithm (GA) based approach is proposed. Numerical results show that the performance of the proposed GA-based approach with significantly reduced computational complexity is close to the optimal performance achieved by exhaustive search based caching strategy, and the GA-based approach can improve the performance by up to 47.5\% on average than the typical probabilistic caching strategy. Secondly, in order to further reduce the computational complexity, a mode selection approach is proposed. Numerical results show that this approach can achieve near-optimal performance over a wide range of the weighting factors through a single computation.	
\end{abstract}

\begin{IEEEkeywords}
Caching strategy, Cloud-RAN, joint optimization, outage probability, fronthaul usage.
\end{IEEEkeywords}

\IEEEpeerreviewmaketitle

\section{Introduction}

\IEEEPARstart{T}HE combination of network densification and coordinated multipoint transmission is a major technical trend in the fifth generation (5G) wireless mobile systems to improve the overall system performance \cite{EE_5G, 5G_survey2, trend_5G, green_hetnet}. In the traditional radio access network (RAN) architecture, each cell has its own base station (BS), where the radio functionality is statically assigned to the base band processing module. Adding more base stations and introducing multiple input multiple output (MIMO) technology will increase the complexity of the network and result in  higher total cost of ownership (TCO) for the mobile operators \cite{cloud_overview, centralization}.

Consisting of centralized base band processing resources, known as base band unit (BBU) pool, and distributed remote radio heads (RRHs) or remote antenna units \cite{comparison, train, allocation_DAS}, cloud radio access network (Cloud-RAN) becomes a new type of RAN architecture to support multipoint transmission and access point densification required by 5G systems\cite{cloud_overview, centralization,green_mimo,incomplete}. The scalable, virtualized, and centralized BBU pool is shared among cell sites, and its computing resources can be dynamically allocated to different cells according to their traffic. The RRHs are responsible for the radio processing task, and they are connected with the BBU pool through fronthauls, while the BBU pool performs the base band processing task and it is connected to the core network through backhauls. Thanks to the novel architecture, Cloud-RAN has many advantages such as cost effective, lower energy consumption, higher spectral efficiency, scalability and flexibility etc., which makes itself a promising candidate for the 5G deployment. Particularly, Cloud-RAN is also a competitive solution for the heterogeneous vehicular networks, which can provide better quality of service (QoS) to intense vehicular users in an urban environment \cite{game_vt, 5G_vt}. However, existing fronthaul/backhaul of Cloud-RAN cannot meet the requirements of the emerging huge data and signaling traffic in terms of transmission bandwidth requirements, stringent latency constraints and energy consumption etc. \cite{backhaul_challenge, redesigning, wireless_backhauling}, which has become the bottleneck of the evolution towards 5G.

Statistics showed that a large amount of the data traffic is generated from a small amount of most popular content files. 
These popular files are requested by a large amount of users, which results in duplicated transmissions of the same content files on the fronthaul and backhaul links. Therefore, content caching in RAN can be a promising solution to significantly reduce the fronthaul/backhaul traffic \cite{caching_air, caching_multicast,fundamental}. During off-peak times, popular content files can be transferred to the cache-enabled access points (macro base station, small cell, relay node etc.). 
If the files requested by mobile users are cached in the access points of the RAN, the files will be transmitted directly from the RAN's cache without being fetched from the core network, which can significantly reduce the fronthaul/backhaul traffic and meanwhile shorten the access latency of the files, thus improve users' quality of experience (QoE). In Cloud-RAN, thanks to the ongoing evolution of fronthaul technology and function splitting between the BBU and RRHs \cite{redesigning, split}, there comes possibility to realize content caching in RRHs, which allows users fetching required content files directly from RRHs and thus can further reduce fronthaul traffic.

There are two stages related with caching: \emph{caching placement stage} and \emph{caching delivery stage} \cite{fundamental}. Caching placement, or known as caching strategy, is the stage to determine which files should be stored in which cache-enabled access points, and delivery stage refers to stage of transmitting the requested files from access points to mobile users through wireless channels. Among these two stages, caching placement is performed for a relative long-timescale. Once a caching placement is carried out, it will not change very frequently. The reason is that the popularity of the content files will remain the same for a relative long period such as several hours, one day, or even longer time. On the other hand, delivery stage runs in a short-timescale. In a delivery stage, the wireless transmission scheme should be able to adapt to the instantaneous channel state information (CSI) which varies very rapidly.

There are many researches investigating the delivery stage with the target of data association or/and energy consumption optimization under a given caching strategy, such as \cite{delivery_beamforming, delivery_content_journal, match_cache}. On the other hand, caching strategy is of importance because it is the initial step to perform caching and obviously it will have an impact on the performance of the delivery stage. The researches investigating  caching strategies generally focus on reducing the file access latency \cite{letter, delay,prob_cl}, or minimizing the transmission cost of the backhaul \cite{optimizing_cl,pso}, or both of them \cite{hyper}.  However, the wireless transmission characteristics such as fading were not considered in the aforementioned researches, i.e., it was assumed that the wireless transmission is error-free. The caching strategy will affect the wireless transmission performance such as outage probability, which is an important metric of the system's performance. For the fronthaul/backhaul traffic or average file access delay reduction, caching different files in the RAN will be optimal, however there is no transmit diversity to combat fading in the file delivery stage, which may decrease the reliability of the wireless transmission. Hence, caching strategy should be optimized by taking into consideration the wireless transmission performance. 

There are some papers considering wireless fading characteristics when designing caching strategy \cite{helpers,backhaul-aware}.  The authors only considered small scale Rayleigh fading by assuming that the user has the same large scale fading at any location. However, in reality, several RRHs will jointly serve the user in Cloud-RAN, and obviously the distance between each RRH and the user will not be the same, so it is important to consider large scale fading in wireless transmission. In addition, they focused on single-objective optimization without considering the fronthaul/backhaul usage.

The aim of caching in RRHs of Cloud-RAN is to significantly reduce the fronthaul traffic. Fronthaul usage, i.e., whether the fronthaul is used, is a metric which can reflect not only the file delivery latency but also the energy consumption of the fronthaul. For example, lower fronthaul usage implies there are more possibilities that mobile user can access the content files in near RRHs, which will shorten the file access latency, meanwhile the fronthaul cost (i.e., the energy consumption) will be lower. On the other hand, outage probability is an important performance metric of the system, which reflects the reliability of the wireless transmission, i.e., whether the requested content files can be successfully transferred to the user, and it also reflects the utility of the wireless resources. 
If replicas of certain content files are cached in several RRHs, the outage probability will be reduced due to the transmit diversity in wireless transmissions, while the fronthaul usage will become higher because the total number of different files cached in the RRHs are reduced and there is a high possibility to fetch files from the BBU pool. On the other hand, caching different files in the RRHs will reduce the fronthaul usage, while the outage probability will become relatively higher due to the decrease of wireless diversity. Therefore, there exists tradeoff between fronthaul usage and outage probability.

In this paper, we investigate downlink transmission in a virtual cell in Cloud-RAN, such as a hot spot area, shopping mall, or an area covered by the Cloud-RAN based vehicular network etc. The tradeoff caching strategy is proposed to jointly minimize the cell average outage probability and the fronthaul usage. A realistic fading channel is adopted, which includes path loss and small scale Rayleigh fading. The caching strategy is designed based on the long-term statistics about the users' locations and content file request profiles. The major contributions of this paper are: 
\begin{enumerate}[1)]
	\item Closed form expression of outage probability conditioned on the user's location is derived, and the cell average outage probability is obtained through the composite Simpson's integration. Simulation results show that the analysis is highly accurate.
	
	\item{ The joint optimization problem is formulated as a weighted sum minimization of cell average outage probability and fronthaul usage, which is a 0-1 integer problem. Two heuristic algorithms are proposed to solve the problem:
		\begin{enumerate}[a)]
			\item An effective genetic algorithm (GA) based approach is proposed, which can achieve nearly the same performance as the optimal exhaustive search, while the computational complexity is significantly reduced. 
			\item In order to further reduce the computational complexity, a mode selection approach is proposed. Simulation results show that it can achieve near-optimal performance over a wide range of weighting factors through a single computation.
	\end{enumerate}}
\end{enumerate}

The remainder of this paper is organized as follows. Section \ref{related_works} reviews the related works. System model is described in Section \ref{system_model}. The optimization problem is formulated in Section \ref{problem_formulation} and the cell average outage probability and fronthaul usage are analyzed. The proposed GA-based approach and the mode selection schemes are described in Section \ref{caching_placement}. Numerical  results are given in Section \ref{numerical_results} and the conclusion is given in Section \ref{conclusion}.   

{\emph{Notations}}: $ \mathbb{E}(\cdot) $ denotes statistical expectation, and $ \operatorname{Re}(\cdot) $ denotes the real part of a complex number. $ \mathbf{A}^{L\times N}=\{a_{l,n}\} $ denotes $ L\times N$  matrix, $ a_{l,n} $ or $ \mathbf{A}(l,n) $ represents the $ (l,n) $-th entry of the matrix. $ \mathbb{R}^+ $ denotes the set of positive real numbers, 
and $ \mathbb{Z}^+ $ denotes positive integer set. 
$ \mathcal{CN}(\mu,\sigma^2) $ represents complex Normal distribution with mean $ \mu $ and variance $ \sigma^2 $, and $ \chi^2(k) $ is the central Chi-squared distribution with $ k $ degrees of freedom.

\section{Related Works} \label{related_works}

From the caching point of view, there exists significant similarities between Cloud-RAN, small cell network, marcocell network and some vehicular networks etc. There are many researches investigating the delivery stage under a certain caching strategy, and the main target was to optimize the data association (e.g., RRH clustering and transmit beamforming), such as \cite{delivery_beamforming, delivery_content_journal, match_cache}. In \cite{delivery_beamforming} and \cite{ delivery_content_journal}, optimal base station clustering and beamforming were investigated to reduce the backhaul cost and transmit power cost under certain caching strategy. In addition, the performances of different commonly used caching strategies, such as popularity-aware caching, random caching, and probabilistic caching, were compared in \cite{delivery_content_journal}. In \cite{match_cache}, assuming there are several small base stations in an orthogonal frequency division multiple access (OFDMA) macro cell, the optimal association of the users and small base stations was investigated to reduce the long-term backhaul bandwidth allocation. In these researches, the caching strategy was assumed to be fixed when designing the delivery schemes, which is because the delivery stage runs in a much shorter timescale than the caching placement stage.

On the other hand, caching strategy has attracted widely concern recently, the related researches mainly focused on the reduction of file access latency \cite{letter,delay,prob_cl}, fronthaul/backhaul transmissions \cite{optimizing_cl,pso}, or both of them \cite{hyper}. In \cite{letter}, a collaborative strategy of simultaneously caching in BS and mobile devices was proposed to reduce the latency for requesting content files. The proposed optimal strategy was to fill the BS's cache with the most popular files and then cache the remaining files of higher popularity in the mobile devices. In \cite{delay}, a distributed algorithm with polynomial-time and linear-space complexity was proposed to minimize the expected overall access delay in a cooperative cell caching scenario. The delay from different sources to the user was modeled as uniformly distributed random variables within a certain range. In \cite{prob_cl}, the probabilistic caching strategy was optimized in clustered cellular networks, where the limited storage capacity of the small cells and the amount of transferred contents within the cluster were considered as two constraints to minimize the average latency. The optimized caching probability of each content file was obtained. 

In \cite{optimizing_cl}, a coded caching placement was proposed to minimize the backhaul load in a small-cell network, where multicast was adopted. The file and cache sizes were assumed to be heterogeneours. In \cite{pso}, to minimize the total transmission cost among the BSs and from the core network, each BS's cache storage was divided into two parts, the first part of all the BSs cached same files with higher popularity ranks, while the second part of all the BSs stored different files. The cache size ratio of the two parts was optimized through particle swarm optimization (PSO) algorithm. In \cite{hyper}, caching strategy was investigated in a Cloud-RAN architecture based networks, and the average content provisioning cost (e.g., latency, bandwidth etc.) was analyzed and optimized subjecting to the sum storage capacity constraint. Analytical results of the optimal storage allocation (how to partition the storage capacity between the control BS and traffic BS) and cache placement (decision on which file to cache) were obtained. However, the aforementioned researches did not take wireless transmission characteristics into consideration. In practice, the caching strategy will have an impact on the performance of the delivery stage, so wireless transmission performance should be considered in order to optimize the caching strategy.

There are some papers considering wireless fading characteristics when designing/investigating caching strategy \cite{two_tier,prob_tvt,helpers,backhaul-aware}. Stochastic geometry was used to analyze large scale networks in \cite{two_tier, prob_tvt}. In \cite{two_tier}, considering a cache-enabled two-tier heterogeneous network with one macrocell BS and several small-cell BSs, outage probability, throughput, and energy efficiency (EE) were analyzed. Each of the BSs caches the most popular content files until the storage is full filled. Numerical results showed that larger small-cell cache capacity may leads to lower network energy efficiency when the density of the small cells is low. In \cite{prob_tvt}, the performance of probabilistic caching strategy was analyzed and optimized in a small-cell environment, and the aim was to maximize the successful download probability of the content files. However, only \emph{probabilistic content placement} can be obtained through using the tool of stochastic geometry \cite{tier_level}, that is, the probability of a certain file should be cached in the access points. In \cite{helpers}, optimal caching placement was obtained through a greedy algorithm to minimize the average bit error rate (BER) in a macro cell with many cache-enabled helpers and each helper can cache only one file. The user selects one helper with the highest instantaneous received signal to noise ratio (SNR) among the helpers which cache the requested file. If none of the helpers cache the requested file, the user will fetch the file from the BS. In \cite{backhaul-aware}, cache-enabled BSs are connected to a central controller via backhaul links. The aim was to minimize the average download delay. Similar to \cite{helpers}, the user selects the BS with the highest SNR in the candidate BSs caching the requested files. In \cite{helpers} and \cite{backhaul-aware}, the authors only considered small scale Rayleigh fading by assuming that the user has the same large scale fading at any location, which is unpractical. In addition, they focused on single-objective optimization without considering the fronthaul/backhaul usage.

Inspired by the aforementioned researches, in this paper, outage probability is used to reflect the wireless transmission performance, and fronthaul usage is used to reflect the transmission latency and power consumption etc. Outage probability and fronthaul usage are jointly considered when designing the caching strategy, which leverages the tradeoff between caching the same content files to obtain lower outage probability or caching different content files to reduce fronthaul usage. Considering the distances from each RRH to the user are different in a Cloud-RAN environment, a more practical fading channel model which includes both large and small scale fading is adopted.

\section{System Model} \label{system_model}

It is assumed that there are $ N $ cache-enabled RRHs in a circular cell with radius $R$, and the set of RRH cluster is denoted as $ \mathcal{N}=\{1,2,\cdots,N\}  $. The file library with a total of $ L $ content files is denoted as $ \mathcal{F}=\{F_1,F_2,\cdots,F_L\}  $, where $ F_l $ is the $ l $-th ranked file in terms of popularity, i.e., $ F_1 $ is the most popular content file.
The popularity distribution of the files follows the Zipf's law \cite{zipf}, and the request probability of the $ l $-th ranked content file is 
\begin{equation}
P_l=\dfrac{l^{-\beta}} {\sum\nolimits_{n=1}^{L}{n^{-\beta}}}~,
\end{equation}
where $ \beta \in \left[ 0,+\infty \right)  $ is the skewness factor. The popularity is uniformly distributed over content files when $ \beta = 0 ~(P_l=1/L, \forall l) $ and becomes more skewed towards the most popular files as $ \beta $ grows, while large popularity skewness is usually observed in wireless applications.

For simplicity, it is assumed that all content files have the same size, and the file size is normalized to 1. Even though the file size will not be equal in practice, each file can be segmented into equal-sized chunks for placement and delivery \cite{caching_multicast, FemtoCaching}. Considering the BBU pool can be equipped with sufficient storage space, it is assumed that all the $ L $ content files are cached in the BBU pool \footnote{Generally speaking, the backhaul connecting the BBU pool and the core network will have larger transmission bandwidth than the fronthaul, so only the fronthaul usage reduction is considered in this paper. In practice, the BBU pool can not cache all the content files originated in the Internet, however, if the requested file is not cached in the BBU pool, it can be fetched from the core network through using backhaul, then it is the same as the file is already cached in the BBU pool as we only focus on the fronthaul usage rather than backhaul usage.}. Some of the content files can be further cached in the RRHs in order to improve the system's performance, and a file can be cached in one or more RRHs depending on the caching strategy. The $ n $-th RRH can cache $ M_n $ files, and generally $\sum\nolimits_{n=1}^{N} M_n < L $. That is, the total caching storage space in all the RRHs is smaller than that in the BBU pool. The caching placement of the content files in the RRHs can be denoted by a binary placement matrix $ \mathbf{A}^{L\times N} $, with the $ (l,n) $-th entry
\begin{equation}
a_{l,n}=\left\{ 
\begin{array}{cl}
1,& \text{the $ n $-th RRH caches the $ l $-th file}\\
0,& \text{otherwise}\\	
\end{array}
\right.
\end{equation}
indicating whether the $ l $-th content file is cached in the $ n $-th RRH, and $ \sum_{l=1}^{L}a_{l,n} = M_n,~\forall n $.

Single user scenario is considered in this paper. However, the proposed algorithms can be applied in practical multiuser systems with orthogonal multiple access technique such as OFDMA system, in which each user is allocated with different subcarriers and there is no interference \cite{chunk,joint,radio}. It is assumed that the user can only request for one file at one time, and all the RRHs caching the requested file will serve the user. If none of the RRHs caches the requested file, the file will be transferred to all the RRHs from the BBU pool through fronthauls, and then to the user from all the RRHs through wireless channels. The service RRH set for the user with respect to (w.r.t.) the $ l $-th file is denoted as 
\begin{equation}
\Phi_l=\left\{
\begin{array}{cl}
\{n|a_{l,n}=1, n\in \mathcal{N}\} &,~\exists n $ such that $ a_{l,n}=1\\
\mathcal{N} &,~\text{$ a_{l,n}=0$ for $\forall n $}\\
\end{array}
\right.,
\end{equation} 
with cardinality $ |\Phi_l|\in \{1,2,\cdots,N\}$, ($ l\in\{1,2,\cdots,L\} $). The system model and file delivery scheme are illustrated in \figurename{\ref{system model}}. For example, when the user requests for the $ l_1 $-th file which is not cached in any of the RRHs, the file will be transferred from the BBU pool to all the RRHs through fronthauls and then transmitted to the user. Then the user's service RRH set is $ \Phi_{l_1}=\{1,2,3,4\} $. When the user requests for the $ l_2 $-th file which is already cached in RRH 2 and RRH 3 via caching placement, the service RRH set is $ \Phi_{l_2}=\{2,3\} $.

\begin{figure}[!t]
	\centering
	{	\includegraphics[width=\hsize]{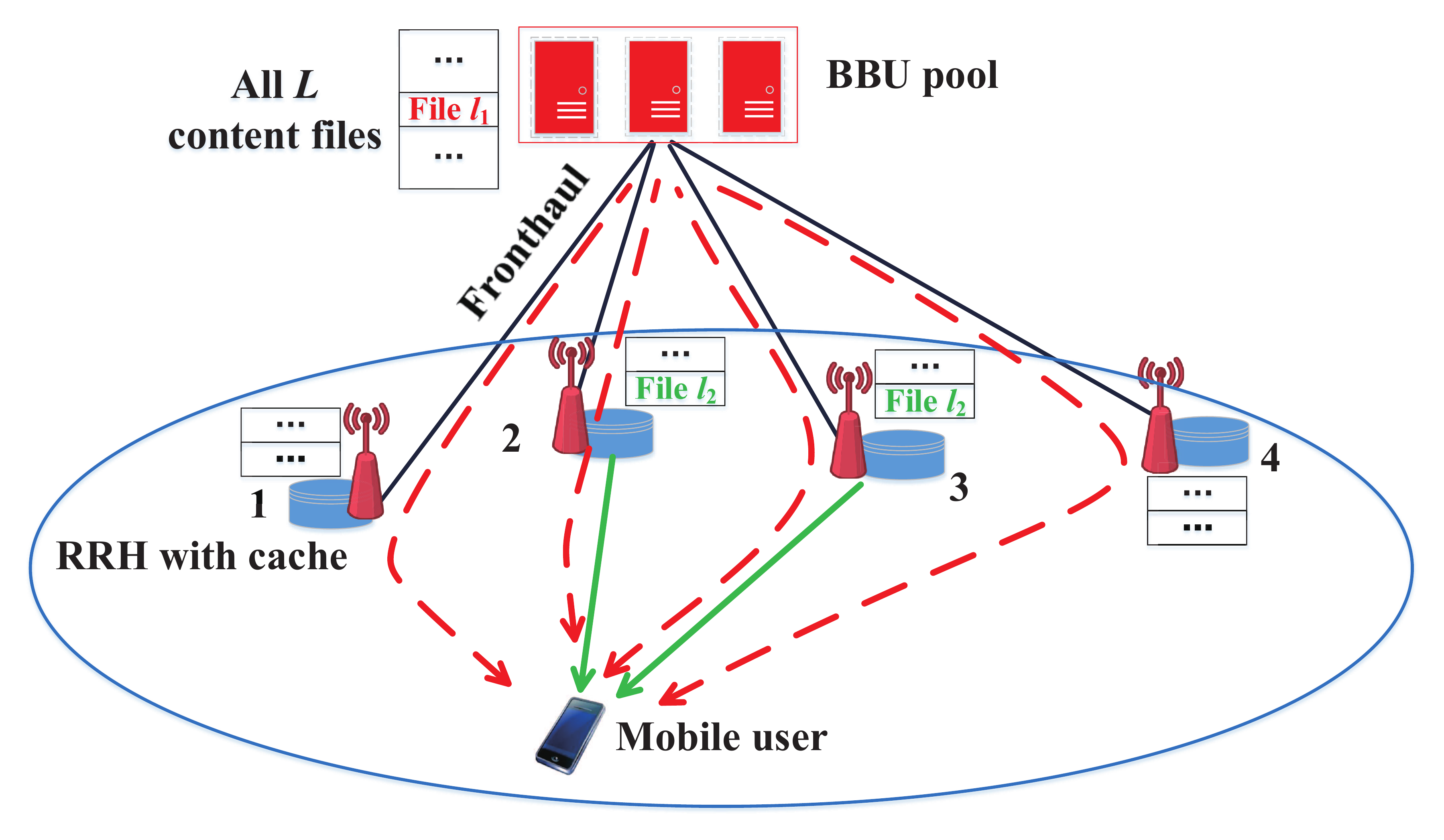}}
	\caption{System model and file delivery scheme. Red dashed and green solid lines represent the file fetching routes when user requests for the $ l_1 $-th and $ l_2 $-th content file, respectively.}
	\label{system model}
\end{figure}

The wireless channel is assumed to be block-fading, i.e., the channel's gain is kept as constant within the duration of a block, and different blocks experience independent and identically distributed (i.i.d.) fading. When being requested, a file would be transmitted through different blocks of the wireless channel. Assuming that both the RRH and the user's device are equipped with single antenna, the user's received signal from the service RRH set when requesting for the $ l $-th file can be expressed as
\begin{equation}
\label{channel} y = \sum\limits_{n\in\Phi_l}^{}\sqrt{p_T K d_n^{-\alpha}} h_n s+noise,
\end{equation}
where $ p_T $ is the transmit power of each RRH, $ K $ is a constant depending on the antenna characteristics and the average channel attenuation, $ d_n $ is the distance between the $ n $-th RRH and the user, $ \alpha $ is the path loss exponent, $ h_n \sim \mathcal{CN}(0,1)  $ represents complex Gaussian small scale fading, $ s $ represents the transmitted symbol and $ \mathbb{E}\left[ |s|^2 \right] =1 $, and $ noise $  denotes  complex additive white Gaussian noise (AWGN) with zero mean and variance $ \sigma^2 $.

The main modeling parameters and notations are summarized in Table \ref{model}.

\begin {table}[!h] 
\caption {Modeling Parameters and Notations} \label{model} 
\vspace{-1em}
\begin{center}
\begin{tabular}{ l  l  }
\hline
\textbf{Symbol}  &  ~~~ \textbf{Definition} \\ \hline
		$ N $ &~~~ Number of RRHs    \\
        $ L $ &~~~ Number of content files    \\
$ P_l $ &~~~ Request probability of the $ l $-th file    \\
$ \beta $ &~~~ Skewness factor of the Zipf's distribution    \\
$ M_n $ &~~~ The number of files that the $ n $-th RRH can cache    \\
$ \mathbf{A}^{L\times N} $ &~~~ Caching placement matrix    \\
$ a_{l,n} $ &~~~ Binary variable indicating whether the $ l $-th file is \\
             &~~~ cached in the $ n $-th RRH, $ (l,n) $-th entry of $ \mathbf{A}^{L\times N} $    \\
$ \Phi_l $ &~~~  The service RRH set for the user w.r.t. the $ l $-th file   \\
$ p_T $ &~~~ Transmit power of each RRH    \\
$ d_n $ &~~~ Distance between the $ n $-th RRH and the user    \\
		\hline
	\end{tabular}
\end{center}
\end{table}

\section{Problem Formulation and Analysis} \label{problem_formulation}

\subsection{Problem Formulation}

Define the normalized fronthaul usage w.r.t. the $ l $-th file as 
\begin{equation}
\label{Tl}
T_l(\mathbf{A})=\prod_{n=1}^{N}\left(1-a_{l,n}\right)=\left\{ 
\begin{array}{cl}
1,& \text{$ a_{l,n}=0$ for $\forall n $}\\
0,& \text{$ \exists n $ such that $ a_{l,n}=1 $}\\	
\end{array}
\right.,
\end{equation}
which indicates that if there is at least one copy of the requested file cached in the RRHs, there will be no fronthaul usage, i.e., $ T_l=0 $, while if the requested file is not cached in any of the RRHs, there will be fronthaul usage, i.e., $ T_l=1 $. Note that $ T_l $ does not depend on the user's location.

The caching strategy should be designed according to the long-term statistics over the user's locations and content file requests. The joint optimization problem can be formulated through a weighted sum of the objectives \cite{multicriteria},
\begin{subequations}
	\label{problem}
	\begin{eqnarray}
	\label{p1}      &\hspace{-0.3cm} \min &\hspace{-0.3cm} f_{obj}(\mathbf{A})=\eta\underbrace{\sum\limits_{l=1}^{L}P_l \mathbb{E}_{x_0}\left[ {P}_{out}^{(l)}(x_0) \right]}_{\textnormal{cell average outage probability}} +(1-\eta)\underbrace{\sum\limits_{l=1}^{L}{P}_l T_l}_{\textnormal{fronthaul usage}},\nonumber\\
	\\
	\label{c1}   &\hspace{-0.3cm} \textit{s.t.}&\hspace{-0.3cm} \sum\limits_{l=1}^{L}a_{l,n} = M_n, \\
	\label{c2}     & \hspace{-0.3cm}          &\hspace{-0.3cm} a_{l,n}\in\{0,1\}.
	\end{eqnarray}
\end{subequations}
where $ \eta \in [0,1] $ is a weighting factor to balance the tradeoff between outage probability and fronthaul usage, $ \mathbb{E}_{x_0} $ denotes expectation in terms of the user's location $ x_0 $, $ P_{out}^{(l)}(x_0) $ is the outage probability when the user requests for the $ l $-th file at location $ x_0 $. Constraint (\ref{c1}) describes the caching limit of each RRH, and constraint (\ref{c2}) indicates the joint optimization as a 0-1 integer problem.

Different values of $ \eta $ will lead to different balances between outage probability and fronthaul usage. Given $ \eta $, the caching strategy can be determined through solving the optimization problem in (\ref{problem}). 
In practice, $ \eta $ is chosen by the decision maker (e.g., RAN's operator) according to the system's long-term statistics of outage probability and fronthaul usage. For example, when the fronthauls' average payload is heavy, a small value of $ \eta $ should be chosen to reduce the fronthaul usage, and the price is to increase the outage probability. On the other hand, when the cell average outage probability is high, a large value of $ \eta $ should be chosen to reduce the outage probability, so that the price is to increase the fronthaul usage.

\subsection{Outage Probability Analysis}
When the user requests for the $ l $-th file at location $ x_0 $, the SNR of the received signal is given by
\begin{equation}
\label{gamma_n}
\gamma_l(x_0) = \sum_{n \in \Phi_l}^{}\frac{p_T}{\sigma^2}K d_n^{-\alpha}|h_n|^2
=\sum_{n \in \Phi_l}^{}\gamma_0 S_n|h_n|^2=\sum_{n \in \Phi_l}^{}\gamma_n,
\end{equation}
where $ \gamma_0=\frac{p_T}{\sigma^2} $ 
is SNR at the transmitter of each RRH, $ S_n=K d_n^{-\alpha} $ is the large scale fading, and $ \gamma_n=\gamma_0 S_n|h_n|^2 $ represents the received SNR from the $ n $-th RRH. For a specific file, without ambiguity, we omit the subscript of file index $ l $ and the user's location $ x_0 $ in the following analysis.

In the service RRH set $ \Phi $ with cardinality $ |\Phi| $, the RRHs with the same distance to the user are grouped together. Assuming there are $ I ~(I\leq |\Phi |) $ groups, the number of RRHs in the $ i $-th group is denoted by $ J_i $, and $ \sum_{i=1}^{I} J_i=|\Phi| $. The distance between the user and the RRH in the $ i $-th group is denoted by $ d_i~(i\in \{1,2,3,\cdots,I\}) $. 
Letting $ \lambda_i = \frac{1}{\gamma_0 K d_i^{-\alpha}} $, the probability density function (PDF) of the received SNR can be obtained as
\begin{equation}
\label{the_pdf_snr}
f_{\gamma}(\gamma)=\sum_{i=1}^{I}\sum_{j=1}^{J_i}\dfrac{\lambda_i^j A_{ij}}{(j-1)!}\gamma^{j-1}e^{-\lambda_i\gamma},
\end{equation}
and the cumulative distribution function (CDF) is given by
\begin{equation}
\label{the_cdf_snr}
\begin{aligned}
F_{\gamma}(\gamma)&=\sum_{i=1}^I\sum_{j=1}^{J_i}\dfrac{\lambda_i^{j-1}A_{i j}}{(j-1)!}\\
&~\cdot\left[\dfrac{(j-1)!}{\lambda_i^{j-1}}-\left(e^{-\lambda_i\gamma}\sum_{k=0}^{j-1}\dfrac{(j-1)!}{(j-1-k)!\lambda_i^k}\gamma^{j-1-k}\right)\right],
\end{aligned}
\end{equation}
where
\begin{equation}
\label{the_A_nm}
\begin{aligned}
&A_{ij}=\dfrac{(-\lambda_i)^{J_i-j}}{(J_i-j)!}\dfrac{d^{J_i-j}}{ds^{J_i-j}}\left[ M_{\gamma}(s)\left(1-\dfrac{1}{\lambda_i}\cdot s\right)^{J_i} \right]\Bigg|_{s=\lambda_i}
\end{aligned},
\end{equation}
and
\begin{equation}
M_\gamma (s)=\prod\nolimits_{n\in \Phi}\frac{1}{1-\gamma _0 S_n \cdot s}\;\;.
\end{equation}
The derivations of (\ref{the_pdf_snr}) and (\ref{the_cdf_snr}) are given in Appendix \ref{A1}.

When the distance between any service RRH and the user is distinct, i.e., $ d_n\neq d_m, \forall n\neq m\in \Phi, $ (\ref{the_pdf_snr}) and (\ref{the_cdf_snr}) are written as
\begin{equation}
\label{PDF_gamma}
\begin{aligned}
f_{\gamma} (\gamma)&=&\sum_{n \in \Phi}^{}\dfrac{1}{\gamma_0 S_n} \left( \prod_{\substack{m \in \Phi\\m \neq n}}^{}\dfrac{S_n}{S_n-S_m}\right) \exp\left( {-\dfrac{\gamma}{\gamma_0 S_n}} \right)
\end{aligned}
\end{equation}
and 
\begin{equation}
\label{CDF_gamma}
\begin{aligned}
F_\gamma (\gamma)=\sum_{n \in \Phi}^{} \left( \prod_{\substack{m \in \Phi\\m \neq n}}^{}\dfrac{S_n}{S_n-S_m}\right) \left[ 1-\exp\left( {-\dfrac{\gamma}{\gamma_0 S_n}} \right) \right],
\end{aligned}
\end{equation}
respectively.

The accuracy of the derived CDF of (\ref{the_cdf_snr}) (written as (\ref{CDF_gamma}) in special case) is illustrated in Fig. \ref{result_0_CDF} through three scenarios. Assuming there are 6 service RRHs for the user, and the distances between the service RRHs and the user are denoted by a vector $ \mathbf{D} $. The three different scenarios are (1) scenario 1: $ \mathbf{D_1}=[0.8R,0.8R,0.8R,0.8R,0.8R,0.8R] $, ($ R $ is the cell radius), i.e., all the RRHs are with the same distance to the user; (2) scenario 2: $ \mathbf{D_2}= [0.6R,0.7R,0.7R,0.8R,0.8R,0.8R] $, i.e., some of the RRHs have same distance with the user; (3) scenario 3: $\mathbf{D_3}=[0.5R,0.6R,0.7R,0.8R,0.9R,1.0R] $, i.e., all the RRHs are with different distances to the user. It can be seen from Fig. \ref{result_0_CDF} that the analytical results match the simulation results, which demonstrates the accuracy of the derived expression of (\ref{the_cdf_snr}).

\begin{figure}[!t]
	\centering
	{	\includegraphics[width=\hsize]{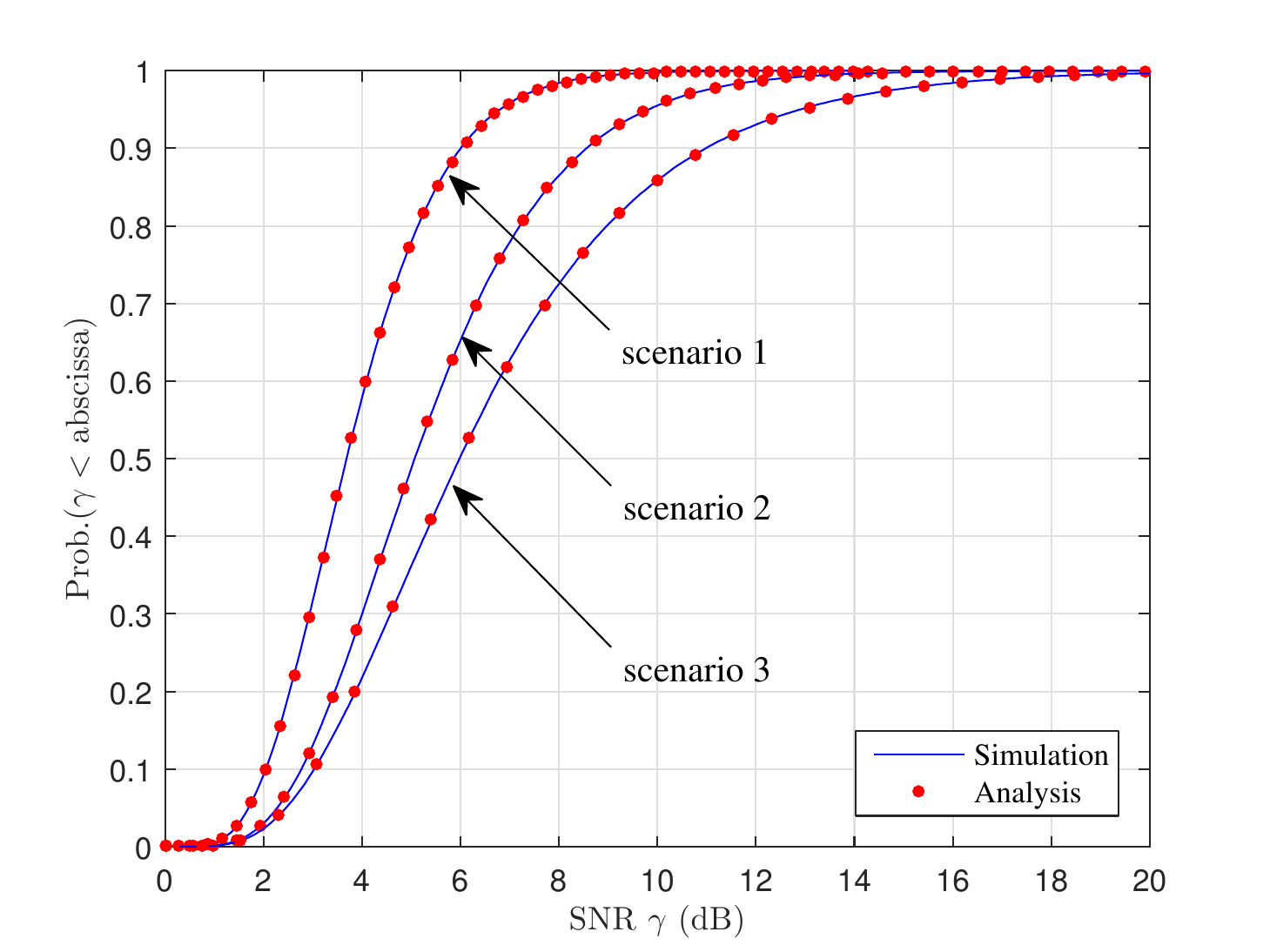}}
	\caption{CDF of the user's received SNR at a fixed location.}
	\label{result_0_CDF}
\end{figure}

The outage probability according to a certain SNR threshold $ \gamma_{th} $ is 
\begin{equation}
\label{out_2_CDF}
P_{out}(\gamma_{th})=F_{\gamma}(\gamma_{th}).
\end{equation}

It is difficult to find a closed form solution of the cell average outage probability w.r.t. the $ l $-th file, i.e., $ \mathbb{E}_{x_0}[P_{out}^{(l)}(x_0)] $. However, we can use the composite Simpson's integration in forms of polar coordinates, where the user's location is denoted by $ (\rho,\theta) $ and $ x_0=\rho e^{j\theta} $.
\begin{equation}
\label{ex}
\begin{aligned}
&\mathbb{E}_{x_0}\left[ {P}^{(l)}_{out}(x_0) \right]  \\
=&\int_{0}^{2\pi}\int_{0}^{R} {P}^{(l)}_{out}(\rho,\theta)f_{x_0}(\rho,\theta)\rho d \rho d \theta  \\
\approx & \dfrac{\Delta h \Delta k}{9}\sum_{u=0}^{U}\sum_{v=0}^{V}w_{u,v}\rho_u {P}^{(l)}_{out}(\rho_u,\theta_v)f_{x_0}(\rho_u,\theta_v), \\
\end{aligned}
\end{equation}
where $ R $ is the cell radius, even integers $ U $ and $ V$ are chosen such that $ \Delta h=R/U $ and $ \Delta k =2\pi/V $ meeting the requirement of calculation accuracy, $ \rho_u=u\Delta h$, $ \theta_v=v\Delta k$, $ f_{x_0}(\rho,\theta)$ is the probability density function of the user's location, which is $ 1/\pi R^2 $ when the user's location is uniformly distributed in the cell, and $ \{w_{u,v}\} $ are constant coefficients (please refer to \cite{system_outage} and Chapter 4 in \cite{book}).

Substituting (\ref{Tl}), (\ref{the_cdf_snr}), (\ref{out_2_CDF}) and (\ref{ex}) into (\ref{p1}), the optimization problem is formulated as a  function of the caching placement matrix $ \mathbf{A}^{L\times N}=\{a_{l,n}\} $. However, the problem is a 0-1 integer nonlinear problem, and it is difficult to obtain a closed form solution. The following section will focus on how to solve this problem.

\section{Caching Placement Scheme} \label{caching_placement}

In this section, two efficient approaches are proposed to solve the joint optimization problem: one is GA-based approach and the other is mode selection approach.

\subsection{Genetic Algorithm Based Approach}

Genetic algorithm is inherently suitable for solving optimization problems with binary variables \cite{ga_survey}. The algorithm structure is shown in Fig. \ref{ga_structure}. Firstly, $ N_p $ candidate caching placement matrices are generated, known as the initial population (with population size $ N_p $), and each matrix is called an individual. Then the objective value of each individual is evaluated through (\ref{p1}). $ N_e $ individuals with best objective values are chosen as elites and passed into next generation (children of current generation population) directly. The rest of the next generation population are generated through crossover and mutation operations. The crossover function operates on two individuals (known as parents) and generates a crossover child, and the mutation function operates on a single individual and generates a mutation child. The number of individuals generated through crossover and mutation operations are denoted as $ N_c $ and $ N_m $, respectively, where  $ N_e+N_c+N_m=N_p $, and the crossover fraction is defined as $ f_c=\frac{N_c}{N_c+N_m} $. The selection function selects $ 2N_c $ and $ N_m $ individuals from the current generation for the crossover and mutation function, respectively, where some individuals will be selected more than once. Stochastic uniform sampling selection  \cite{introduction_GA} is adopted, and individuals with lower objective values in current generations will have a higher probability to generate offsprings. Repeat the evaluation-selection-generation procedures until termination criterion is reached. Finally, the best individual in the current population is chosen as the output of the algorithm. The initial population, crossover function and mutation function of the proposed GA approach are described as follows.

\subsubsection{Initial Population}

The initial population is created as a set of $ \{\mathbf{A}^{L\times N}\} $. For each column in each individual,  $ M_n $ out of the first $ L' $ entries (i.e., $ \{a_{1,n}, a_{2,n}, \cdots, a_{L',n}\} $) are set to be one randomly, and all the remaining entries  are set to be zero, where
\begin{equation}
L' = \sum\nolimits_{n=1}^{N} M_n < L
\end{equation}
is based on the fact that the total different files with higher popularity can be cached in the RRHs are $ \{F_l|l=1,2,\cdots,L'\}  $. There is no benefit to cache files  $ \{F_l|l>L'\} $ with lower popularity.

\subsubsection{Crossover Function}

The crossover function generates a child $ \mathbf{A}_c $ from parents $ \mathbf{A}_1 $ and $ \mathbf{A}_2 $. A two-point crossover function is used, which is described in Algorithm \ref{crossover}, in which steps \ref{s1} to \ref{s4} are heuristic operations to meet constraint (\ref{c1}).

\IncMargin{0em} 
\begin {algorithm}[!h]
\caption {Crossover function}
\label{crossover}
\SetAlgoLined
Get parent $ \mathbf{A}_1=\{a_{l,n}^{(1)}\} $ and $ \mathbf{A}_2=\{a_{l,n}^{(2)}\} $ from selection function, initialize their child $ \mathbf{A}_c=\{a_{l,n}^{(c)}\}=\mathbf{0}^{L\times N} $.\\
\For {$ n=1,2,\cdots,N $} {
	Generate random integers $ l_1,l_2 \in [1,L'] $, $ l_1\neq l_2 $ \\
	\eIf{$ l_1<l_2 $}{Replace $ a_{l,n}^{(1)},~l=\{l_1,l_1+1,\cdots,l_2\} $ of $ \mathbf{A}_1 $ with $ a_{l,n}^{(2)},~l=\{l_1,l_1+1,\cdots,l_2\} $ of  $ \mathbf{A}_2 $, and then set $ a^{(c)}_{l,n}=a^{(1)}_{l,n}, ~\forall l \in\{1,2,\cdots,L\} $.\\}
	{ Replace $ a_{l,n}^{(2)},~l=\{l_2,l_2+1,\cdots,l_1\} $ of $ \mathbf{A}_2 $ with $ a_{l,n}^{(1)},~l=\{l_2,l_2+1,\cdots,l_1\} $ of  $ \mathbf{A}_1 $, and then set $ a^{(c)}_{l,n}=a^{(2)}_{l,n}, ~\forall l \in\{1,2,\cdots,L\} $.\\} 
	\While{$ \sum\nolimits_{l=1}^{L}a^{(c)}_{l,n} > M_n $ 	\label{s1} }{
		Set nonzero $ a^{(c)}_{l,n} $ to 0 in descending order of $ l $.
	}
	\label{s2}
	\While{$ \sum\nolimits_{l=1}^{L}a^{(c)}_{l,n} < M_n $ \label{s3}}{
		Set zero $ a^{(c)}_{l,n} $ to 1 in ascending order of $ l $.         
	}
	\label{s4}}	
\end{algorithm}
\DecMargin{0em}

\subsubsection{Mutation Function}

The mutation function operates on a single individual and generates its mutation child. For each column of the individual, one of the first $ L' $ entries is randomly selected and the value is set to be the opposite ($ 0 $ to $ 1 $ and vice versa), then steps \ref{s1} to \ref{s4} described in Algorithm \ref{crossover} are executed to meet constraint (\ref{c1}). The mutation operation reduces the probability that the algorithm converges to local minimums.

If $ N_g $ generations are evaluated, there is a total of $ N_pN_g $ calculations of the objective values. In order to further reduce the computational complexity of caching strategy, a mode selection approach is proposed in next subsection.

\begin{figure*}[!t]
	\centering
	{	\includegraphics[width=\hsize]{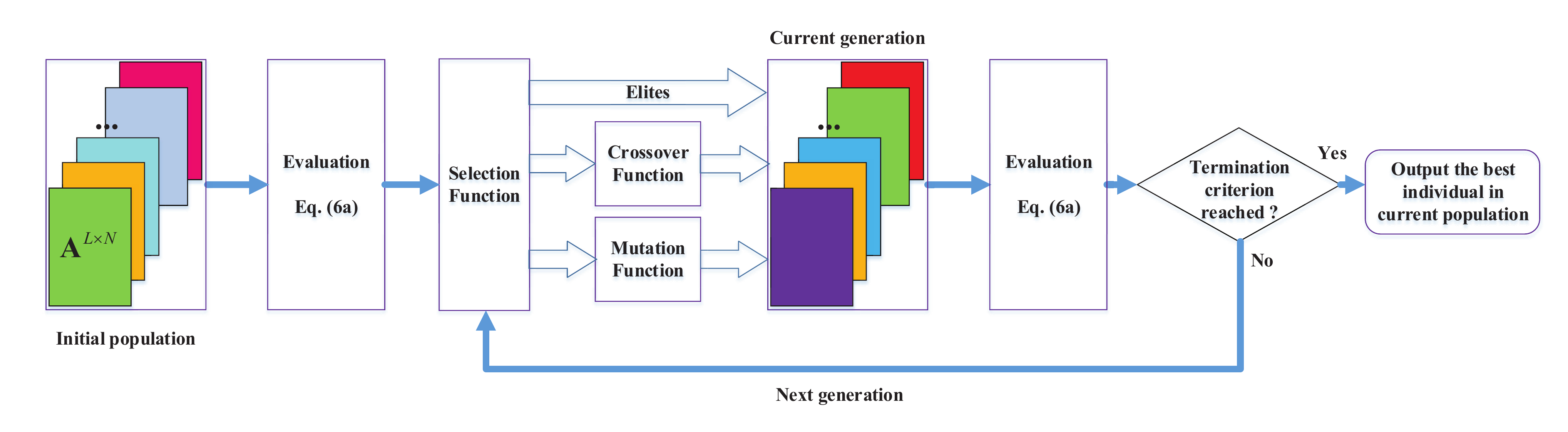}}
	\caption{Genetic algorithm structure.}
	\label{ga_structure}
\end{figure*}

\subsection{Mode Selection Approach} \label{two-mode}

There are two particular caching placement schemes: one is the most popular content (MPC) caching, and the other one is the largest content diversity (LCD) caching \cite{backhaul-aware, MPC_original, LCD_original}. 
In MPC, each RRH caches the most popular files, i.e., the $ n $-th RRH caches $ \{F_l|l=1,2,\cdots,M_n\} $, which will have low outage probability while high fronthaul usage. In the LCD scheme, a total of $ L'=\sum\nolimits_{n=1}^{N} M_n ~(< L) $ different most popular content files are cached in the RRHs, which can have lowest fronthaul usage while relatively high outage probability. If the LCD scheme is adopted in Cloud-RAN, the impact of locations of caching content files on the cell average outage probability needs to be considered. Assuming the locations of the user are uniformly distributed in the cell, caching the most popular files in the RRH nearest to the cell center will achieve better outage probability performance, which is similar to the RRH placement problem \cite{antenna}. Therefore, for Cloud-RAN, we improve the LCD scheme and propose a location-based LCD (LB-LCD) scheme which is described in Algorithm \ref{LCD}.

\IncMargin{0em} 
\begin {algorithm}[!h]
\caption {Proposed LB-LCD caching strategy}
\label{LCD}
\SetAlgoLined
Sort the RRH set as $ \mathcal{N}_s = \{n_i|i=1,2,\cdots,N, \; \;  D_{n_1}\leq D_{n_2}\leq \cdots \leq D_{n_N}\}$, where $ D_{n_i} $ denotes the distance between the $ n_i $-th RRH and the cell center. \\
Fill the cache of the RRH set  $ \mathcal{N}_s $ in sequence from $ n_1 $ to $ n_N $ with content files $ \{F_l|l=1,2,\cdots,\sum\nolimits_{n=1}^{N} M_n\} $ in ascending order of $ l $.\\
\end{algorithm}
\DecMargin{0em}

For example, there are 3 RRHs $ \{1,2,3\} $, and each RRH can cache 3 files, so that all the RRHs can cache 9 different content files. The distance between RRH $ i $ and the cell center is $ D_i $, assuming $ D_1<D_2<D_3 $. The LB-LCD caching strategy is illustrated in Table \ref{LCD_example}.

\begin {table}[!h] 
\caption {Example of The LB-LCD Caching Strategy} \label{LCD_example} 
\vspace{-1em}
\begin{center}
	\begin{tabular}{ c  c  c c}
		\hline
		\textbf{RRH} & \textbf{1} &  \textbf{2}  &   \textbf{3}\\ 
		\hline
		\begin{tabular}{c} Files \\cached \end{tabular} & \begin{tabular}{c} $ F_1 $\\$ F_2 $\\$ F_3 $		\end{tabular} & \begin{tabular}{c} $ F_4 $\\$ F_5 $\\$ F_6 $\end{tabular}	& \begin{tabular}{c} $ F_7 $\\$ F_8 $\\$ F_9 $ \end{tabular}	\\
		\hline
	\end{tabular}
\end{center}
\end{table}

\begin{prop}
	\label{prop1}
	The objective value of both the MPC scheme and the LCD scheme is linear with $ 
	\eta $. When $ \eta $ is small, i.e., minimization of the fronthaul usage is weighted more, the LCD scheme is superior to the MPC scheme. When $ \eta $ is large, i.e., minimization of the cell average outage probability is weighted higher, the MPC scheme is superior to the LCD scheme. There exists a crossover point of the two schemes, the weighting factor of the crossover point is
	\begin{equation}
	\label{eta_caculation}
	\begin{array}{c}
	\eta_0=\dfrac{1}{1+\dfrac{\sum\nolimits_{l=1}^{L}P_l\mathbb{E}_{x_0}\left[ P_{out,MPC}^{(l)}(x_0)-P_{out,LCD}^{(l)}(x_0)\right] }{\sum\nolimits_{l=1}^{L}P_l\left( T_{l,LCD}-T_{l,MPC}\right)}}
	\end{array}.
	\end{equation}
	When $ M_n = M,~\forall n $, $ \eta_0 $ can be further expressed as
	\begin{equation}
	\label{eta_caculation_2}
	\begin{array}{c}
	\eta_0=\dfrac{1}{1+\dfrac{\sum\nolimits_{l=1}^{NM}P_l \mathbb{E}_{x_0}\left[P_{out,LCD}^{(l)}(x_0)-P_{out,MPC}^{(l)}(x_0)\right] }{\sum\nolimits_{l=M+1}^{NM}P_l}}
	\end{array}.
	\end{equation}
\end{prop}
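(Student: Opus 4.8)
The plan is to exploit that both MPC and LCD fix the placement matrix $\mathbf{A}$ a priori, so that the two bracketed sums in (\ref{p1}) are constants. Writing, for a scheme $X\in\{MPC,LCD\}$, the cell average outage as $O_X=\sum_{l=1}^{L}P_l\,\mathbb{E}_{x_0}[P_{out,X}^{(l)}(x_0)]$ and the fronthaul usage as $T_X=\sum_{l=1}^{L}P_l\,T_{l,X}$, the objective evaluated on scheme $X$ becomes $f_X(\eta)=\eta\,O_X+(1-\eta)\,T_X$. Since neither $O_X$ nor $T_X$ depends on $\eta$, $f_X$ is affine in $\eta$, which proves the linearity assertion.

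Next I would establish the two orderings that produce the claimed superiority regions. For the fronthaul term, recall from (\ref{Tl}) that $T_{l,X}=1$ precisely when $F_l$ is cached in no RRH. Both schemes cache a popularity-ordered prefix of the library: MPC stores $\{F_1,\dots,F_{\max_n M_n}\}$ while LCD stores $\{F_1,\dots,F_{\sum_n M_n}\}$, and $\max_n M_n\le\sum_n M_n$, so the LCD-cached set contains the MPC-cached set; hence $T_{l,LCD}\le T_{l,MPC}$ for every $l$ and $T_{LCD}\le T_{MPC}$. For the outage term, observe that under MPC every file is served either by all RRHs (if cached) or, if uncached, again by $\mathcal{N}$ through the fronthaul, so $\Phi_l=\mathcal{N}$ for all $l$ and the diversity is maximal; under LCD a cached file is served by a single RRH ($|\Phi_l|=1$) while an uncached file is served by $\mathcal{N}$. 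Because the received SNR in (\ref{gamma_n}) is a sum of independent nonnegative terms over $\Phi_l$, enlarging $\Phi_l$ stochastically increases $\gamma$ and therefore can only decrease the outage $F_\gamma(\gamma_{th})$ in (\ref{out_2_CDF}); consequently $O_{MPC}\le O_{LCD}$. This SNR-monotonicity step is the main obstacle, since it is the only point that is not pure bookkeeping; everything else follows from counting cached files.

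I would then form the affine difference $g(\eta)=f_{LCD}(\eta)-f_{MPC}(\eta)=\eta(O_{LCD}-O_{MPC})+(1-\eta)(T_{LCD}-T_{MPC})$. The endpoint values $g(0)=T_{LCD}-T_{MPC}\le 0$ and $g(1)=O_{LCD}-O_{MPC}\ge 0$ show that LCD is superior for small $\eta$ and MPC for large $\eta$; since $g$ is affine with oppositely signed endpoints, it has a unique root $\eta_0\in[0,1]$, the crossover point. Solving $g(\eta_0)=0$ gives $\eta_0=\dfrac{T_{MPC}-T_{LCD}}{(O_{LCD}-O_{MPC})+(T_{MPC}-T_{LCD})}$, and dividing numerator and denominator by $T_{MPC}-T_{LCD}$ and rearranging rewrites this in the stated form (\ref{eta_caculation}).

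Finally, for the case $M_n=M$ I would substitute the explicit placements. MPC then caches $\{F_1,\dots,F_M\}$ in each RRH, so $T_{l,MPC}=1$ for $l>M$, while LCD caches $\{F_1,\dots,F_{NM}\}$, so $T_{l,LCD}=1$ for $l>NM$; hence $T_{MPC}-T_{LCD}=\sum_{l=M+1}^{NM}P_l$, which is the denominator in (\ref{eta_caculation_2}). Moreover, since $\Phi_l=\mathcal{N}$ under both schemes whenever $l>NM$, the per-file outage differences $\mathbb{E}_{x_0}[P_{out,LCD}^{(l)}-P_{out,MPC}^{(l)}]$ vanish for $l>NM$, so the outage numerator collapses to $\sum_{l=1}^{NM}P_l\,\mathbb{E}_{x_0}[P_{out,LCD}^{(l)}-P_{out,MPC}^{(l)}]$. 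Substituting both simplifications into (\ref{eta_caculation}) yields (\ref{eta_caculation_2}).
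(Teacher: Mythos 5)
Your proposal is correct for the homogeneous case $M_n=M$ and follows essentially the same route as the paper's proof: establish linearity of $f_{obj}$ in $\eta$, compare the two schemes at the endpoints $\eta=0$ and $\eta=1$, invoke the affine crossing, and solve the resulting linear equation to get (\ref{eta_caculation}) and (\ref{eta_caculation_2}). One thing you do better: your stochastic-dominance argument (the received SNR in (\ref{gamma_n}) is a sum of independent nonnegative terms over $\Phi_l$, so enlarging $\Phi_l$ can only lower $F_\gamma(\gamma_{th})$) supplies a justification for the outage ordering that the paper merely asserts, stating without proof that the single-RRH sum $E$ exceeds the all-RRH sum $C$.

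Two flaws are worth fixing. First, an internal inconsistency: you set up the fronthaul comparison for heterogeneous cache sizes (MPC caching $\{F_1,\dots,F_{\max_n M_n}\}$, LCD caching $\{F_1,\dots,F_{\sum_n M_n}\}$), but your outage step then claims that under MPC every cached file is served by all RRHs, i.e.\ $\Phi_l=\mathcal{N}$ for all $l$. This is false when the $M_n$ differ: a file $F_l$ with $\min_n M_n < l \le \max_n M_n$ is cached only in the RRHs with $M_n\ge l$, so $\Phi_l^{MPC}$ is a proper subset of $\mathcal{N}$ and the per-file ordering $\mathbb{E}_{x_0}[P_{out,MPC}^{(l)}]\le\mathbb{E}_{x_0}[P_{out,LCD}^{(l)}]$ need not hold (MPC's few caching RRHs may be worse placed than LCD's single RRH for that file). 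The paper sidesteps this by assuming $M_n=M$ throughout its proof and only remarking that the argument ``can be extended''; your proof should either make the same restriction explicit or do the extra work. Second, your weak inequalities $T_{LCD}\le T_{MPC}$ and $O_{MPC}\le O_{LCD}$ are not enough to guarantee a unique crossover or a nonzero denominator in (\ref{eta_caculation}); strictness is needed, and it is available: Zipf probabilities are strictly positive, so $T_{MPC}-T_{LCD}=\sum\nolimits_{l=M+1}^{NM}P_l>0$ whenever $N\ge 2$, and adding an independent, almost-surely positive SNR term strictly reduces the outage at any positive threshold, so $O_{LCD}-O_{MPC}>0$ as well.
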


\begin{proof}
Please refer to Appendix \ref{A3}.
\end{proof}

Based on proposition \ref{prop1}, we propose a 
mode selection caching strategy. The RAN can make a decision of the tradeoff according to the statistics of cell average outage probability and fronthaul usage in the cell, and a tradeoff weighting factor $ \eta $ is chosen. When $ \eta \leq \eta_0 $, select the LB-LCD caching scheme, while when $ \eta > \eta_0 $, select the MPC caching scheme.

\subsection{Computational Complexity Analysis}

The number of objective function calculations w.r.t. a certain value of $ \eta $ is evaluated to measure the complexities of the exhaustive search method, the proposed GA approach and the proposed mode selection approach. The complexity of exhaustive search is $ \prod_{n=1}^{N} \binom{L}{M_n} $. When $ M_n=M,~ \forall n $, it is clear that the complexity of exhaustive search is exponential w.r.t. the number of RRHs, i.e., $ \binom{L}{M}^N $. The complexity of the proposed GA is $ N_pN_g $, where $ N_p $ and $ N_g $ are the population size and the number of generations evaluated, respectively.  $ N_g $ is determined by the convergence behavior of the GA. While the complexity of the proposed mode selection scheme is only 2. The reason is that, once the value of $ \eta_0 $ is solved from  (\ref{eta_caculation}), the RAN can choose a mode between MPC and LCD based on whether $ \eta>\eta_0 $, and 2 objective function calculations are involved in solving the equation. Further more, once $ \eta_0 $ is obtained,  caching schemes for all values of $ \eta $ are obtained. The computational complexities of the three approaches are summarized in Table \ref{complexity}.

\begin {table}[!h] 
\caption {Computational Complexity} \label{complexity} 
\vspace{-1em}
\begin{center}
	\begin{tabular}{ l  l  }
		\hline
		\textbf{Scheme}  &  \textbf{Objective function calculations} \\ \hline
	Exhaustive search &~~~~~ $ \prod\nolimits_{n=1}^{N}\binom{L}{M_n}$    \\
		Proposed GA-based approach &~~~~~ $ N_pN_g $ \\
		Proposed mode selection approach &~~~~~ 2 \\
		\hline
	\end{tabular}
\end{center}
\end{table}

\section{Numerical Results} \label{numerical_results}

In this section, the performances of the proposed two caching strategies are investigated through some representative numerical results. Firstly, the accuracy of the cell average outage probability and fronthaul usage analysis are verified by evaluating two typical caching schemes, i.e., the MPC and the LB-LCD schemes. Then the effectiveness of the proposed GA approach is verified by comparing its performance with exhaustive search, where the Pareto optimal solutions \cite{multicriteria} of the joint optimization problem are presented. In the proposed GA approach, placement matrices of the MPC and the LB-LCD schemes are added into the initial population to further improve the performance. Finally, performances of different caching strategies are compared and the convergence behavior of the proposed GA is presented. 

The MATLAB software is used for the Monte-Carlo simulations and numerical calculations. Throughout the simulation, it is assumed that each RRH has the same cache size $ M_n = M $. The transmit power of each RRH is $ p_T = \frac{P}{N} $,  where $ P $ is the total transmit power in the cell and $ \frac{P}{\sigma^2}=23 ~\text{dB} $. The constant $ K $ in (\ref{channel}) is chosen such that the received power attenuates 20 dB when the distance between the RRH and the user is $ R $ \cite{transmission_schemes}. In such setting, the outage probability does not depend on the absolute value of $ R $, that is, $ R $ can be regarded as the normalized radius. The main simulation parameters are summarized in Table \ref{parameter}.

\begin {table}[!h] 
\caption {Simulation Parameters} \label{parameter} 
\vspace{-1em}
\begin{center}
	\begin{tabular}{ l  c  }
		\hline
		\textbf{Parameter}  & \textbf{Value} \\ \hline
		Path loss exponent $ \alpha $ 	& $ 3 $\\
		$ P/\sigma^2 $ & $23$ dB \\
		SNR threshold 	 $ \gamma_{th} $ &  $ 3 $ dB\\
		User location distribution & uniform\\
		$ U $ and $ V $ in Simpson's integration & $ 6, 6 $\\
		Population size $ N_p $ in GA & $ 50 $ \\	
		Selection function  & stochastic universal sampling\\
		Number of elites $ N_e $ 	& $ 10 $ \\
		Crossover fraction $ f_c $ & $ 0.85 $ \\
		\hline
	\end{tabular}
\end{center}
\end{table}

\subsection{MPC and LB-LCD Caching Placements}

Cell average outage probability ($ \sum\nolimits_{l=1}^{L}P_l \mathbb{E}_{x_0}[ P_{out}^{(l)}(x_0) ] $) and average fronthaul usage ($ \sum\nolimits_{l=1}^{L}P_l T_l $) of the MPC and LB-LCD schemes are shown in Fig. \ref{result_1_a} and Fig. \ref{result_1_b}, respectively. There are $ L=50 $ files, $ N=7 $ RRHs with one RRH located at the cell center and the other 6 RRHs evenly distributed on the circle with radius $ {2R}/{3} $ \cite{delivery_content_journal, DAS_randomness}, and each RRH can cache $ M=5 $ files. Both the simulation and numerical results are shown in this subsection. In the Monte-Carlo simulations, there are $ 10^4 $ realizations of the user's different locations and content file requests.

Cell average outage probability with different SNR threshold $ \gamma_{th} $ and popularity skewness factor $ \beta $ is illustrated in Fig. \ref{result_1_a}. It can be seen that the outage probability of both the MPC and the LB-LCD schemes increases with the increase of $ \gamma_{th} $, and the outage probability of the MPC scheme is lower than that of the LB-LCD scheme. The MPC curves of different values of $ \beta $ coincide. The reason is as follows: according to the file delivery scheme and the MPC caching strategy, no matter whether the requested file is cached in the RRHs or not, the file will be transmitted to the user from all the RRHs, thus the cell average outage probability w.r.t. any $ l $-th file is the same, denoting as  $ \mathbb{E}_{x_0}[ P_{out}^{(l)}(x_0) ]=P_{cell,out}$, then the  cell average outage probability expected on all the file requests is
\begin{equation}
\sum\limits_{l=1}^{L}P_l \mathbb{E}_{x_0}[ P_{out}^{(l)}(x_0) ] = P_{cell,out}\cdot\sum\limits_{l=1}^{L}P_l= P_{cell,out} , 
\end{equation}
which is not related to $\beta $, i.e., no matter how the popularity is distributed over the files, the cell average outage probability is kept as a constant.

\begin{figure}[!t]
	\centering
	{\includegraphics[width=\hsize]{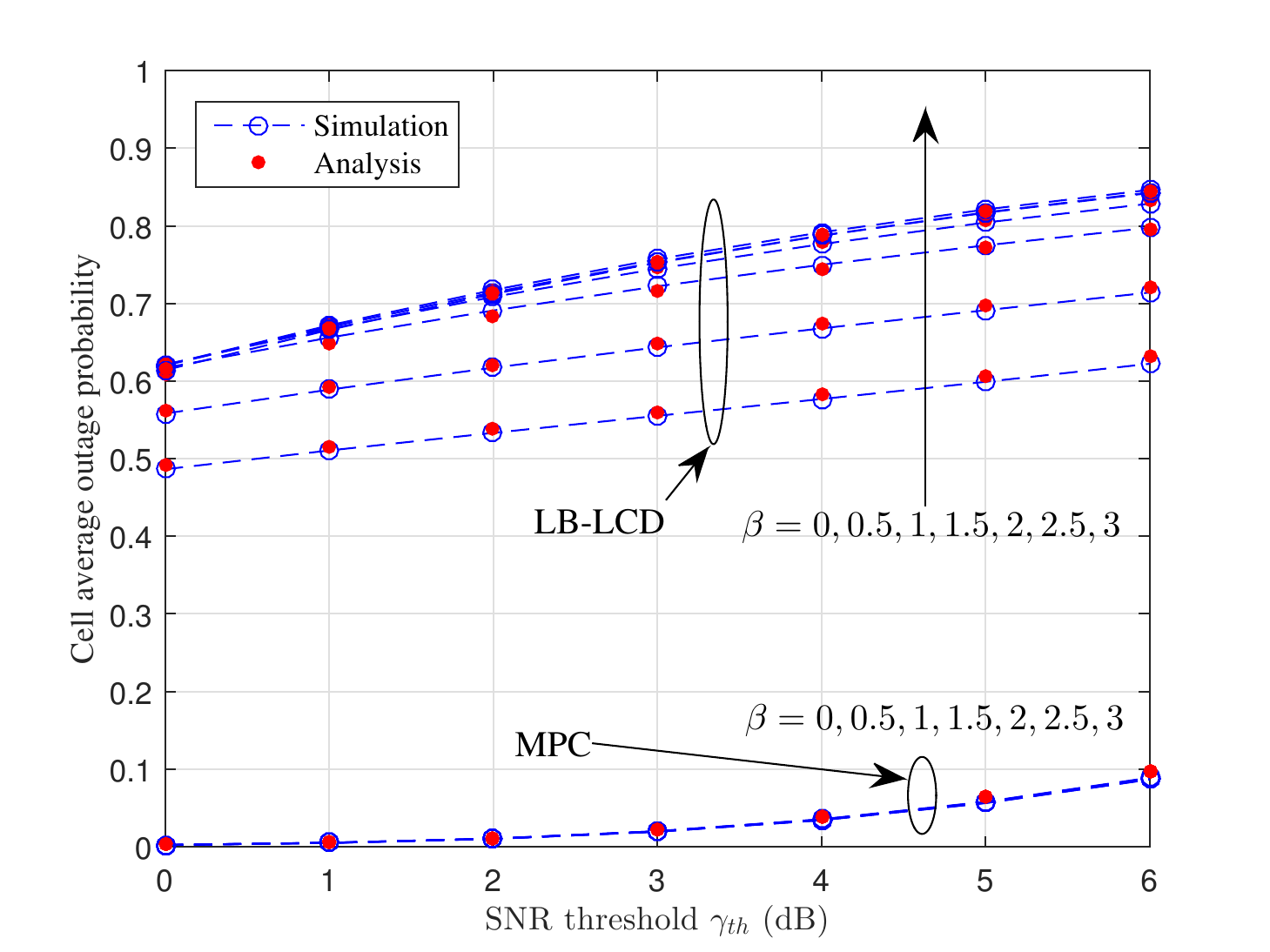}} 
	\caption{Cell average outage probability. $ L = 50, M=5, N=7$.}
	\label{result_1_a}
\end{figure}

For the LB-LCD scheme, cell average outage probability reaches the minimum value when $ \beta=0 $,  and it increases as $ \beta $ increases and approaches the maximum value when $ \beta $ is large enough, e.g., $ \beta = 2,2.5,3 $. The reason is explained as follows. According to the file delivery scheme and the LB-LCD caching strategy, if the requested file is not cached in any of the RRHs, the file will be fetched from the BBU pool, and then transmitted to the user from all the RRHs. The outage probability will then achieve the minimum value due to wireless diversity. While if the requested file is cached in the RRH (only cached in one of the RRHs), the file will be transmitted to the user from only one RRH, and the outage probability will be relatively higher. When $ \beta=0 $,  $ P_l=1/L, \forall l $, i.e., the request probability is the same for all the content files, which means that the cell average outage probability depends evenly on the outage probability of each file, and the outage probability of the files which are not cached in the RRHs is lower than that of the files cached in the RRHs. As $ \beta $ increases, the more skewness of the popularity will toward the first few files with high ranks, i.e., the cell average outage probability depends more on these files, and there is a higher probability that there is only one copy for each of these files cached in one certain RRH, and the corresponding outage probability is high, so the outage probability increases as $ \beta $ increases and the curve with $ \beta=0 $ is the lower bound. Note that  
\begin{equation}
\label{sum1_5}
 \sum_{l=1}^{5}P_l = \left\lbrace
 	\begin{aligned}
 	0.90, & ~\beta=2.0\\
 	0.96, & ~\beta=2.5\\
 	0.99, & ~\beta=3.0
 	\end{aligned}\right.
\end{equation}
which means when $ \beta $ is large enough ($ \beta\geqslant 2.0 $), the cell average outage probability depends mainly on the first 5 most popular files. These 5 files are cached in the RRH located at the cell center, and the cell average outage probability w.r.t any one of the 5 files is the same, so the cell average outage probability is nearly the same for different values of $ \beta ~(\geqslant 2.0) $, which approaches the maximum value.

Fig. \ref{result_1_b} shows the fronthaul usage of the two caching schemes. Because the fronthaul usage is independent of $ \gamma_{th} $, the curve versus different values of the skewness factor $ \beta $ is evaluated. The LB-LCD scheme has lower fronthaul usage than the MPC scheme, which is because that the LB-LCD scheme caches a total of $ MN=5\times7=35 $ different files in the RRHs while the MPC scheme caches only $ M=5 $ different files. The average fronthaul usage of both the MPC and LB-LCD scheme decreases with the increase of $ \beta $, which is due to the same reason that as $ \beta $ increases, the popularity becomes more skewed towards the first few files with higher ranks, and there is a higher probability that these few files are cached in the RRHs. As shown in  (\ref{sum1_5}), when $ \beta=3 $, the fronthaul usage almost depends on the first 5 popular files, since they are all cached in the RRHs under both the MPC and the LB-LCD caching strategies, the fronthaul usages of both schemes approach zero.

It is seen from Fig. \ref{result_1_a} and Fig. \ref{result_1_b} that the analytical results are highly consistent with the simulation results. Therefore, analytical results will be used instead of time-consuming simulations in the following evaluations.

\begin{figure}[!t]
	\centering
	{\includegraphics[width=\hsize]{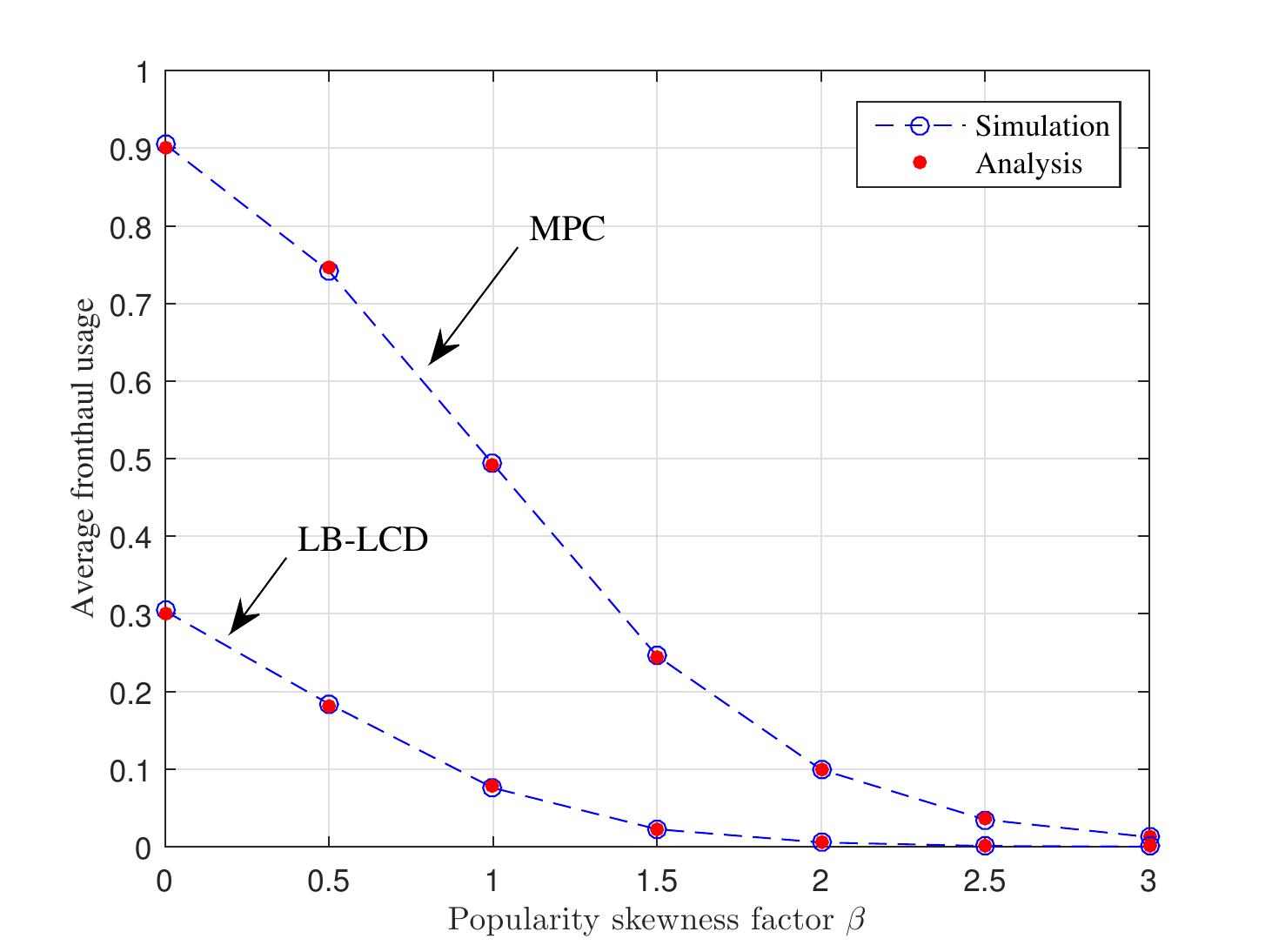}} 
	\caption{Cell average fronthaul usage. $ L = 50, M=5, N=7 $.}
	\label{result_1_b}
\end{figure}

\subsection{Tradeoffs between Cell Average Outage Probability and Fronthaul Usage}

Tradeoffs between cell average outage probability and fronthaul usage obtained by exhaustive search and the proposed GA-based approach are shown in this subsection. There are three RRHs, and the polar coordinates of which are $ \left(\frac{R}{4},0\right) $, $ \left(\frac{R}{3},\frac{2\pi}{3}\right) $, and $ \left(\frac{R}{2},\frac{4\pi}{3}\right) $, respectively. There are $ L=9 $ content files, and the popularity skewness factor $ \beta = 1.5 $.

\begin{figure}[!t]
	\centering
	{\includegraphics[width=\hsize]{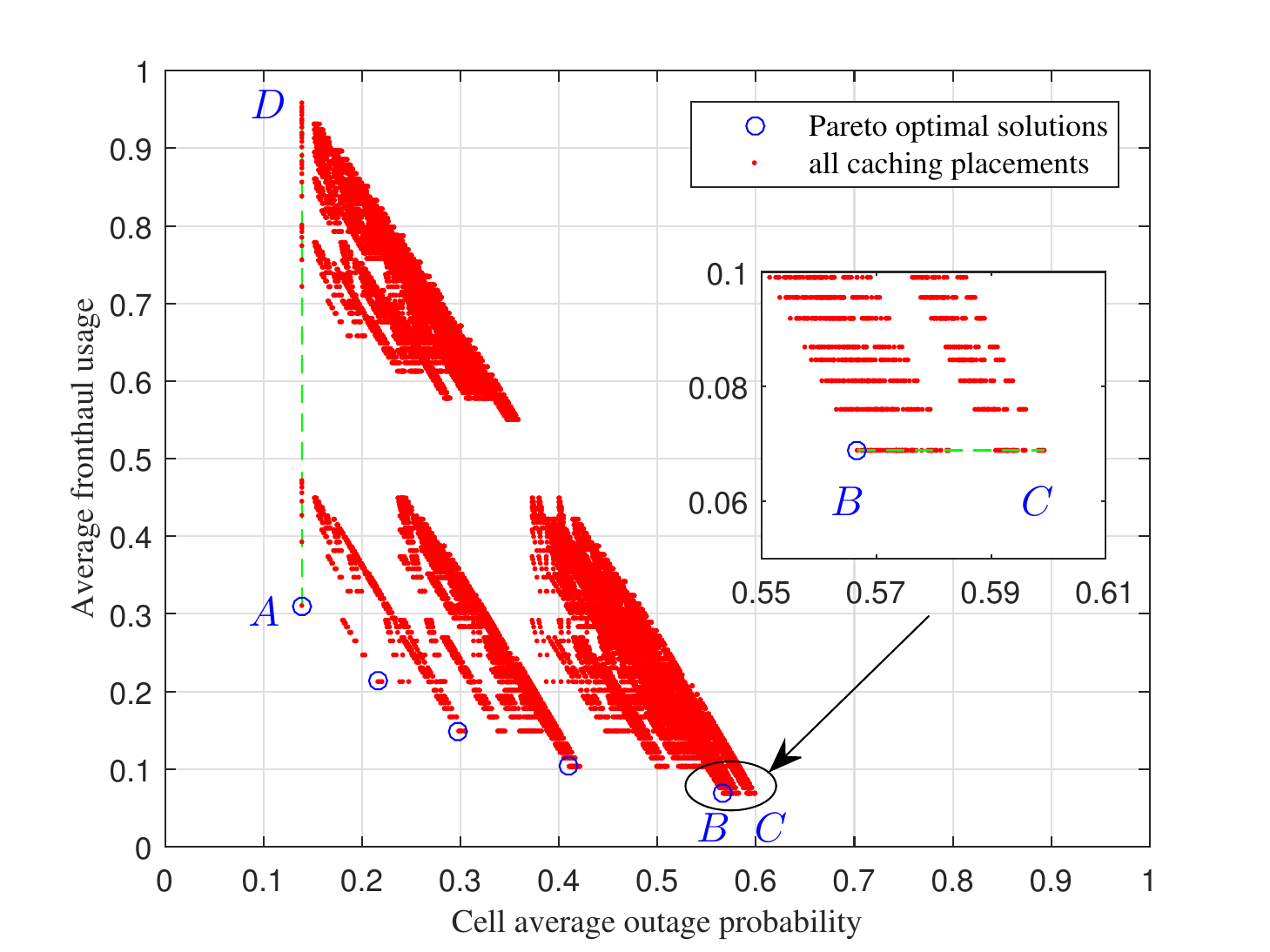}}
	\caption{Cell average outage probability and fronthaul usage tradeoff region. $ L = 9, M = 2, N = 3,  \beta = 1.5$. Each red point corresponds to a caching placement, and the 5 points emphasized by small blue circles are the Pareto optimal solutions of the joint optimization problem.}
	\label{pareto_front}
\end{figure}

Fig. \ref{pareto_front} is focused on the scenario when the cache size $ M $ is equal to $ 2 $. All tradeoffs between cell average outage probability and fronthaul usage are given by exhaustive search. Since the popularities of the content files $ \{ P_l \} $, the fronthaul usage $ \{T_l\} $ and the outage probability $ \{ P^{ (l) }_{out}\} $ are all discontinuous values w.r.t. integer $ l $, the cell average outage probability and average fronthaul usage region of all caching placements is a set of discrete points as shown in the figure, where each red point corresponds to a caching placement. The 5 points emphasized by small blue circles are the Pareto optimal solutions (nondominated set \cite{multicriteria}) of the joint optimization problem, i.e., there is no other point dominating with the Pareto optimal solutions in terms of both the cell average outage probability and fronthaul usage.

The cell average outage probability is minimized when the files cached in each RRH are the same, and the popularity of these cached files will have an impact on the average fronthaul usage. The corresponding points of these caching placements lie on the line segment $ AD $, i.e., line segment $ AD $ represents the lower bound of the cell average outage probability. The MPC scheme represented by point $ A $ achieves the minimum fronthaul usage among these caching placements. The reason is that the MPC scheme caches the most popular files which can reduce the fronthaul usage to a minimum value among these caching placements.
   
The fronthaul usage is minimized when all RRHs cache different files with higher popularity ranks, and the cache locations of these files will have an impact on the cell average outage probability. The corresponding points of these caching placements lie on the line segment $ BC $, i.e., line segment $ BC $ represents the lower bound of the average fronthaul usage. The LB-LCD scheme represented by point $ B $ achieves the minimum cell average outage probability among these caching placements. The reason is that the LB-LCD scheme caches the files with higher ranks in the RRHs near to the cell center, which has the minimum cell average outage probability among these caching placements.

\begin{figure}[!t]
	\centering
	{\includegraphics[width=\hsize]{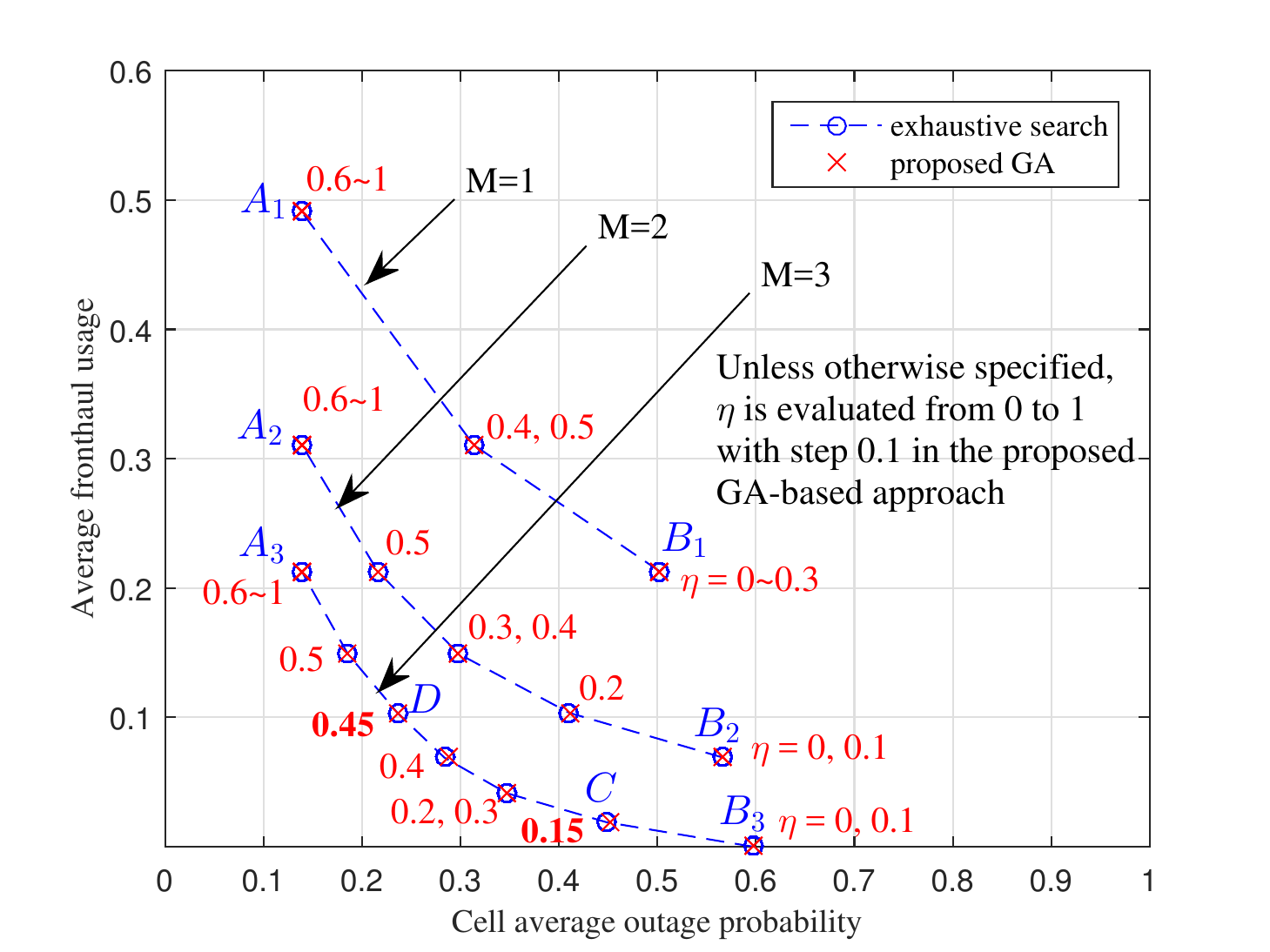}}
	\caption{Cell average outage probability and fronthaul usage tradeoff. $ L = 9, M = \{1, 2, 3\}, N = 3, \beta = 1.5$. Unless otherwise specified in the figure, $ \eta $ is evaluated from 0 to 1 with a step 0.1 in the proposed GA-based approach, i.e., $\eta = \{0.1,0.2,\cdots, 1\}$.}
	\label{ex_and_ga}
\end{figure}

Fig. \ref{ex_and_ga} shows the Pareto optimal tradeoffs between the cell average outage probability and the fronthaul usage with different cache size $ M $. The results obtained through the proposed GA approach are almost the same as exhaustive search, which means that the proposed GA approach can achieve near-optimal performance. The minimum cell average outage probability is achieved at point $ A_1 $ when $ M=1 $, $ A_2 $ when $ M=2 $, and $ A_3 $ when $ M=3 $, respectively. The minimum cell average outage probability represented by the three points are the same, and the corresponding caching placements of the three points are the MPC scheme. The reason is that according to the file delivery scheme and the MPC caching placement, all the RRHs will serve the user no matter how many files the RRHs can cache. It is also seen that the corresponding fronthaul usage of the three points decrease as $ M $ increases, which is obvious because larger cache size can cache more files thus the fronthaul usage can be reduced.

On the other hand, the minimum fronthaul usage is achieved at point $ B_1 $ when $ M=1 $, $ B_2 $ when $ M=2 $, and $ B_3 $ when $ M=3 $, respectively. The corresponding caching placements of the three points are the LB-LCD scheme. Obviously, the corresponding fronthaul usage of the three points decreases as $ M $ increases. The fronthaul usage is zero at point $ B_3 $ when $ M=3 $, the reason is that all the RRHs can cache a total of  $ MN=3\times 3 = 9 $ files, which is equal to the number of files in the file library, i.e., all the files are cached in the RRHs. The corresponding cell average outage probability of the three points increases as $ M $ increases. The reason is that according to the file delivery scheme and the LB-LCD caching strategy, more different files can be cached in the RRHs as $ M $ increases, however, there is only one copy of each file and the outage probability w.r.t. these cached files will be higher, i.e., more different files cached in the RRHs, higher the cell average outage probability is.

Note that as the cache size $ M $ increases, the GA-based approach should evaluate more values of $ \eta $ in order to obtain all the Pareto optimal solutions of the joint optimization problem. For example, when $ M=3 $, additional values of $ \eta = 0.15 $ and $ \eta = 0.45 $ are evaluated to obtain the Pareto optimal solutions represented by point $ C $ and $ D $.

Table \ref{placement_table} shows all the optimal caching placements obtained by the proposed GA-based approach 
when $ M=2 $. For illustration, we use a $ M\times N $ matrix to represent the caching placement, with the $ (m,n) $-th entry $ b_{m,n}\in\{1,2,3,\cdots,L\} $ denotes the file index cached in the $ m $-th cache space of the $ n $-th RRH. From (\ref{eta_caculation_2}), $ \eta_0 = 0.3312 $. It can be seen that the LB-LCD scheme is the optimal placement when $ \eta= 0,0.1<\eta_0 $, while the MPC scheme is the optimal solution when $\eta=0.6\sim 1.0>\eta_0 $, and some files are duplicately cached in the RRHs when $ \eta=0.2\sim 0.5 $. 

\begin {table}[!h] 
\caption {Optimal Caching Strategy Obtained by The Proposed GA} \label{placement_table} 
\vspace{-1em}
\begin{center}
	\begin{tabular}{ c | c | c }
		\hline
		$\begin{array}{c} \eta = 0 \\f_{obj}=0.0689\\  \left[\begin{array}{ccc} 1&3&5\\2&4&6 \end{array}\right] \end{array}$    &    $\begin{array}{c} \eta = 0.1 \\ f_{obj}=0.1186\\  \left[\begin{array}{ccc} 1&3&5\\2&4&6 \end{array}\right] \end{array}$     &     $\begin{array}{c} \eta = 0.2 \\ f_{obj}=0.1651\\  \left[\begin{array}{ccc} 1&1&4\\2&3&5 \end{array}\right] \end{array}$\\ \hline
		$\begin{array}{c} \eta = 0.3 \\ f_{obj}=0.1938\\  \left[\begin{array}{ccc} 1&1&1\\2&3&4 \end{array}\right] \end{array}$   &    $\begin{array}{c} \eta = 0.4 \\ f_{obj}=0.2087\\  \left[\begin{array}{ccc} 1&1&1\\2&3&4 \end{array}\right] \end{array}$    &    $\begin{array}{c} \eta = 0.5 \\ f_{obj}=0.2144\\  \left[\begin{array}{ccc} 1&1&1\\3&2&2 \end{array}\right] \end{array}$\\ \hline
		$\begin{array}{c} \eta = 0.6 \\ f_{obj}=0.2077\\  \left[\begin{array}{ccc} 1&1&1\\2&2&2 \end{array}\right] \end{array}$    &    $\begin{array}{c} \eta = 0.7 \\ f_{obj}=0.1905\\  \left[\begin{array}{ccc} 1&1&1\\2&2&2 \end{array}\right] \end{array}$    &    $\begin{array}{c} \eta = 0.8 \\ f_{obj}=0.1733\\  \left[\begin{array}{ccc} 1&1&1\\2&2&2 \end{array}\right] \end{array}$  \\ \hline
		$\begin{array}{c} \eta = 0.9 \\ f_{obj}=0.1561\\  \left[\begin{array}{ccc} 1&1&1\\2&2&2 \end{array}\right] \end{array}$   &    $\begin{array}{c} \eta = 1.0 \\ f_{obj}=0.1390\\  \left[\begin{array}{ccc} 1&1&1\\2&2&2 \end{array}\right] \end{array}$ & $\begin{array}{l} L=9\\ M=2\\ N=3\\ \beta=1.5 \end{array}$\\
		\hline
	\end{tabular}
\end{center}
\end{table}

According to the above evaluations, the MPC and LB-LCD caching schemes are two special solutions of the joint optimization problem when $ \eta =1 $ and $ \eta =0$, respectively. The former can achieve the lowest cell average outage probability while the latter can achieve the minimum fronthaul usage. The proposed GA-based approach can achieve different tradeoffs between the cell average outage probability and fronthaul usage according to different weighting factors, which can achieve better performance than the MPC and LB-LCD schemes.

\subsection{Performances of the GA-based Approach and Mode Selection Approach}

The performances of the proposed GA-based approach and the mode selection approach are analyzed in this subsection. Besides the {MPC} and the {LB-LCD} caching schemes, two other widely used caching strategies are evaluated for comparison, one is \emph{random caching}, where each RRH caches the content files independently and randomly regardless of the files' popularity distribution, the other one is \emph{probabilistic caching}, where each RRH caches the files independently and randomly according to the files' popularity distribution, i.e., high-ranked files have higher probability to be cached \cite{delivery_content_journal, cost_balancing}. There are $ L=50 $ files \footnote{Alougth there is a huge amount of content files in practice, they can be classified into different categories \cite{MPC_original}, and the number of files in each category (or subcategory) is relatively limited, so the proposed algorithms can be performed on each category, the number of files evaluated in the simulation will not lose meaningful insights of the tradeoff caching optimization.}, $ N=7 $ RRHs with one RRH located at the cell center and the other 6 RRHs evenly distributed on the circle with radius $ {2R}/{3} $, and $ \beta = 1.5 $.

\begin{figure}[!t]
	{\includegraphics[width=\hsize]{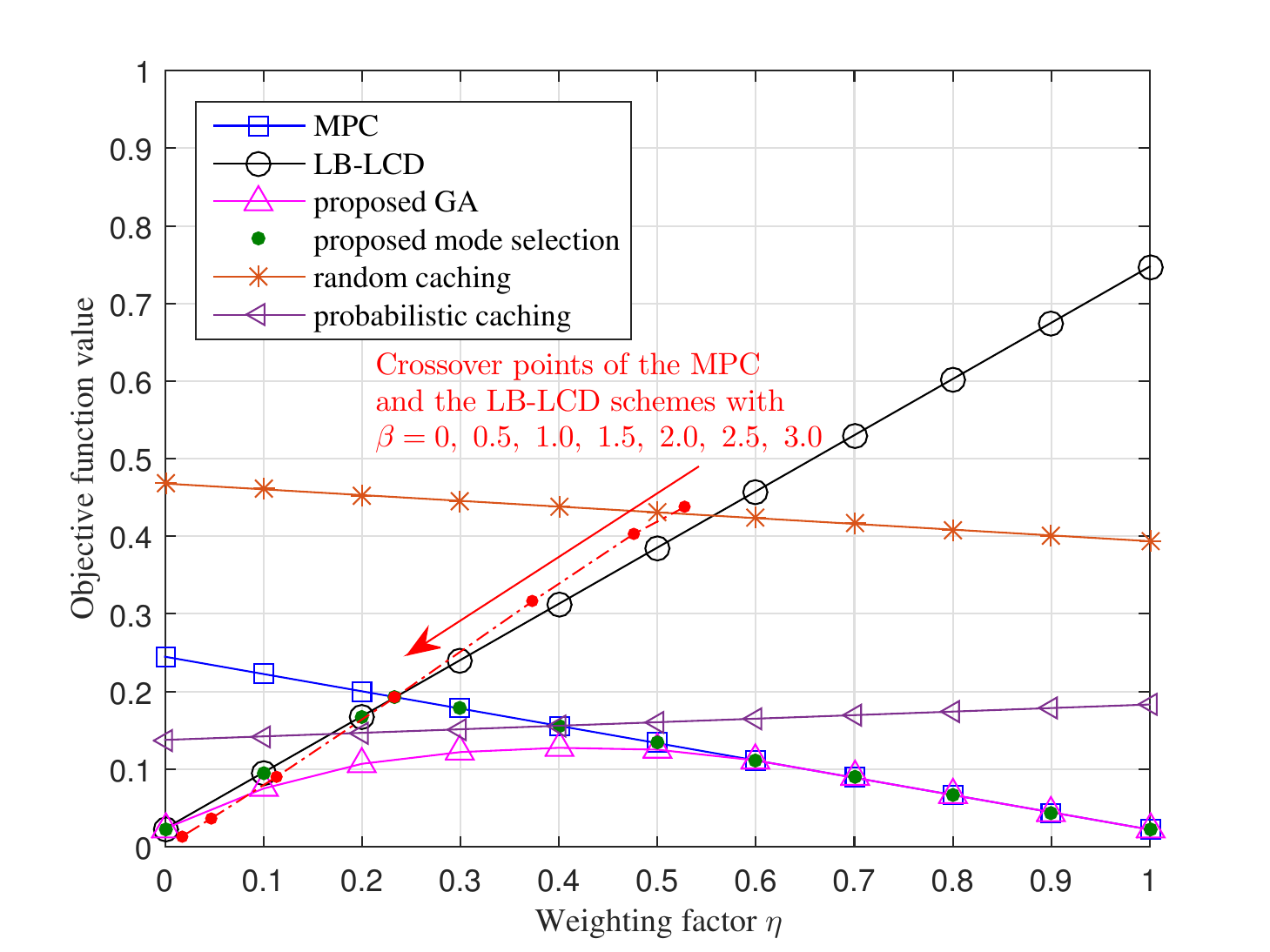}}
	\caption{Objective function value versus weighting factor $ \eta $. $L = 50, M = 5, N = 7 $.}
	\label{various_eta}
\end{figure}

Fig. \ref{various_eta} shows the objective value of different caching strategies with $ M=5 $. It can be seen from the figure that as the weighting factor $ \eta $ increases, i.e., more focus on minimization of outage probability, the objective value of MPC decreases linearly, while the objective value of the LB-LCD scheme increases linearly. The horizontal coordinate of the crossover point of the MPC and LB-LCD scheme ($ \eta_0 $) approaches  zero as the popularity skewness factor $ \beta $ increases. This is because that when $ \beta $ increases, the requesting probability $ P_l $ of the first few popular files increase significantly, then $ \sum\nolimits_{l=M+1}^{NM}P_l \rightarrow 0 $ in (\ref{eta_caculation_2}), thus $ \eta_0 \rightarrow 0 $. That is, as $ \beta $ increases, the MPC scheme will dominate with most values of $ \eta $. This can also be explained as follows. When $ \beta $ increases, the average fronthaul usage will depend more and more on the few files with higher ranks. These files can be cached in the RRHs under both of the MPC and the LB-LCD schemes, thus the MPC and the LB-LCD schemes are equivalent in terms of fronthaul usage, while the MPC can achieve lower outage probability. Therefore the MPC scheme is superior to the LB-LCD scheme. The crossover point $\eta_0 =0.23 $ when $ \beta =1.5 $ calculated through (\ref{eta_caculation_2}) exactly matches the simulation results. The above mentioned results are consistent with Proposition \ref{prop1}. 

The random caching strategy has a relative poor performance for all values of $ \eta $, which is because the files cached in the RRHs are selected randomly, there is neither a high probability to cache the same file for reducing the outage probability nor to cache different high-ranked files for reducing the fronthaul usage. While the probabilistic caching strategy can achieve better performance than the proposed mode selection approach in the middle range of $ \eta $, e.g., for $ \eta=0.2\sim 0.4 $, where both the cell average outage probability minimization and the fronthaul usage reduction are treated approximate equally. Which is because that, in probabilistic caching, each RRH will cache the high-ranked files with a higher probability, so there is a high probability for different RRHs to cache the same high-ranked files, which can reduce the cell average outage probability, and meanwhile the inherent randomness in the placement makes it possible to cache different files to reduce the fronthaul usage. It is also seen that the proposed GA-based approach can achieve better performance than the other caching strategies, for instance, the objective function value of the proposed GA algorithm is 18.25\% lower than a typical probabilistic caching scheme when $ \eta=0.4 $, and this improvement goes up to 87.9\% when $ \eta=1 $, the average improvement over all values of $ \eta $ is 47.5\%. 

\begin{figure}[!t]
	{\includegraphics[width=\hsize]{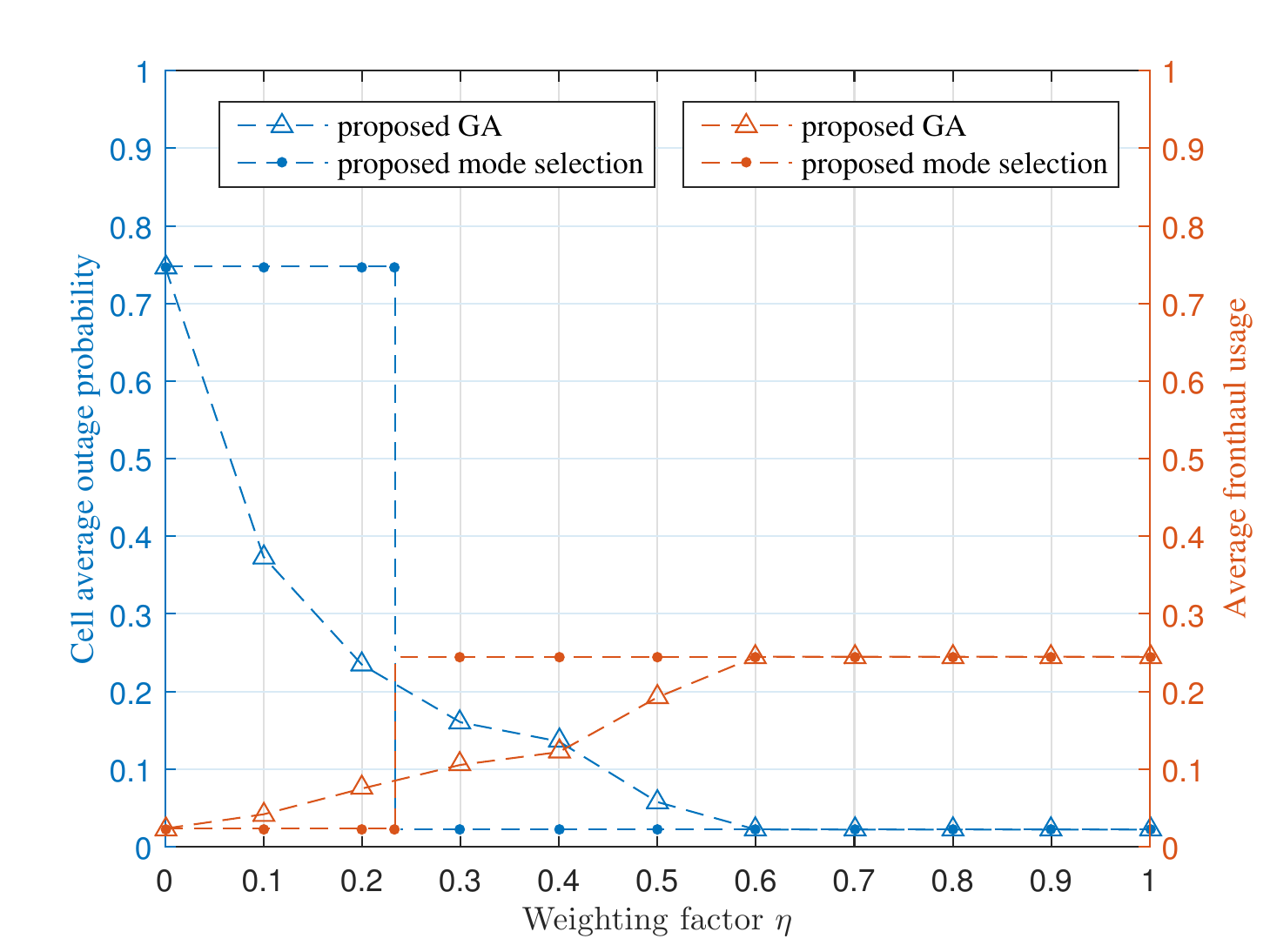}}
	\caption{Cell average outage probability and fronthaul usage versus weighting factor $ \eta $. $L = 50, M=5, N = 7, \beta=1.5 $.}
	\label{plotyy}
\end{figure}

Cell average outage probability and fronthaul usage of the proposed GA and the proposed mode selection approach versus weighting factor are shown in Fig. \ref{plotyy}. Note that the mode selection scheme is actually the LB-LCD scheme when $ \eta \leq \eta _0 $ and the MPC scheme when $ \eta > \eta _0 $, respectively. For the proposed GA approach, the solution is exactly the LB-LCD scheme when $ \eta = 0 $, as $ \eta $ increases, the cell average outage probability decreases and the fronthaul usage increases, and they reach the lower and upper bounds when $ \eta\geqslant 0.6 $, respectively, where the solution is the MPC scheme. The proposed GA approach can adjust the caching placement according to different weighting factors $ \eta $ while the mode selection scheme only chooses a caching placement between the MPC and the LB-LCD schemes based on whether $ \eta>\eta _0 $, so the proposed GA approach can achieve better performance than the mode selection scheme. However, the computational complexity of the mode selection scheme is extremely low.

\begin{figure}[!t]
	{\includegraphics[width=\hsize]{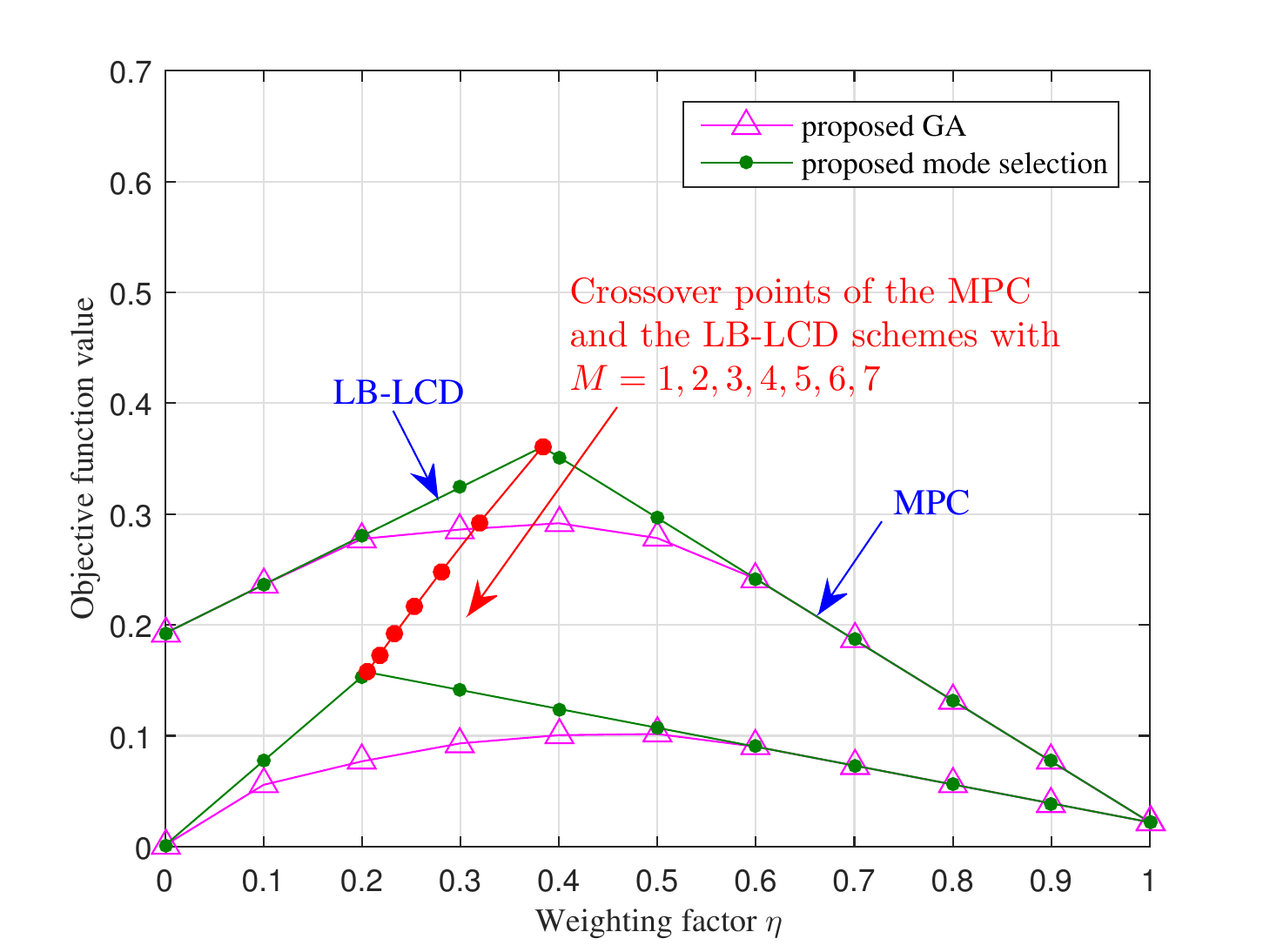}}
	\caption{Objective function value versus weighting factor $ \eta $. $L = 50, N = 7, \beta=1.5 $.}
	\label{various_M}
\end{figure}

Fig. \ref{various_M} shows the performance of the proposed GA and the mode selection scheme with different cache size $ M $. It can be seen from the figure that the mode selection scheme can achieve near-optimal performance over a wide range of the weighting factor $ \eta $. The vertex of the mode selection scheme, i.e., the crossover point of the MPC and the LB-LCD schemes moves toward the origin as the cache size $ M $ increases, i.e., the MPC scheme will dominate with most values of $ \eta $ as $ M $ increases. The reason is explained as follows. When $ M $ increases, more content files can be cached in the RRHs. The fronthaul usage depends mostly on the first few popular files cached in the RRHs, so the fronthaul usage will tend to be the same between the two schemes as $ M $ increases. The MPC scheme can achieve lower outage probability, further more, the cell average outage probability of the LB-LCD scheme increases as $ M $ increases, so the objective value of the MPC scheme will be much lower than that of the LB-LCD scheme, and the MPC scheme is superior to the LB-LCD scheme with most values of $ \eta $.

\begin{figure}[!t]
	{\includegraphics[width=\hsize]{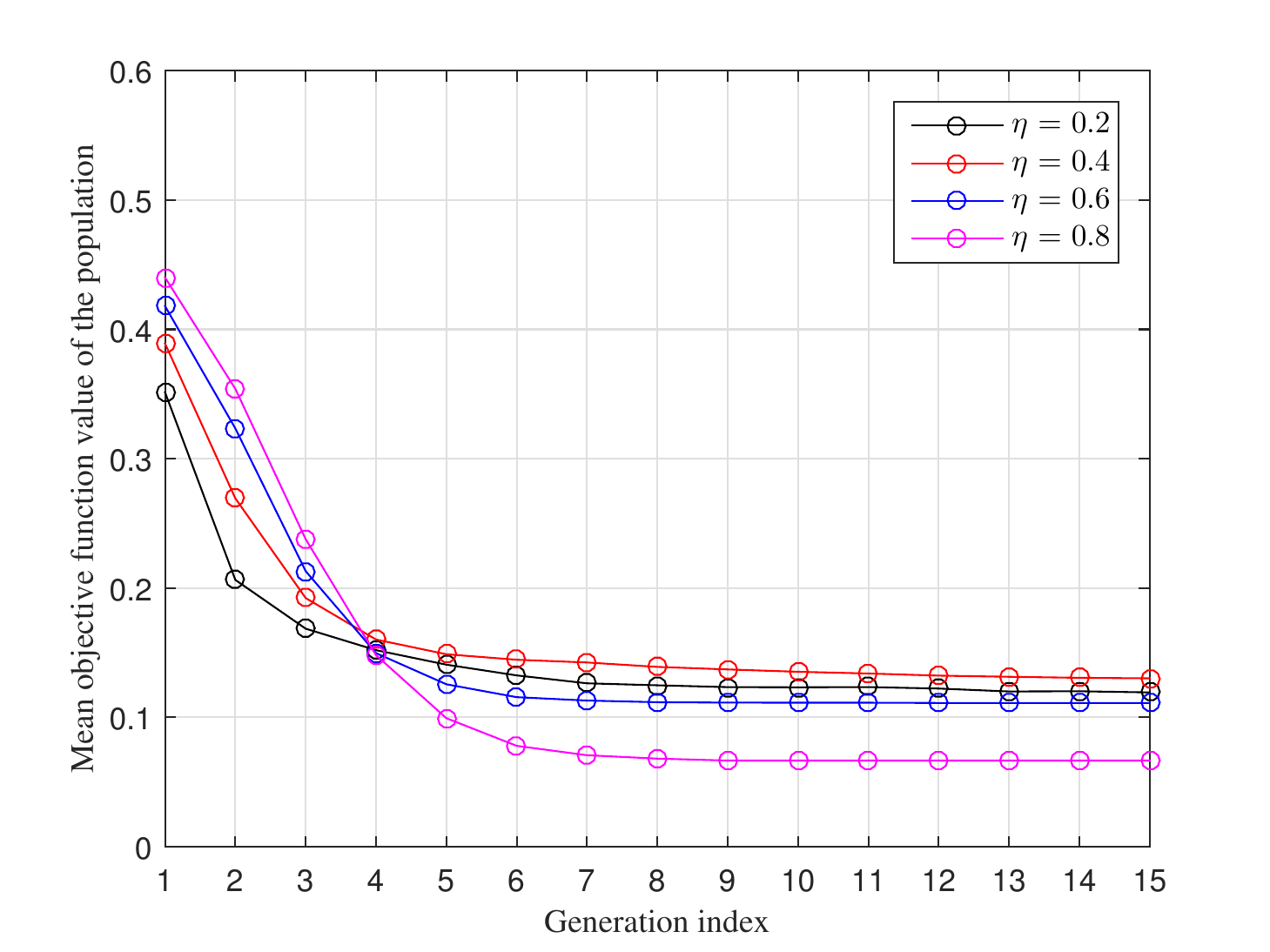}}
	\caption{Convergence behavior of the proposed GA approach. $ L = 50, M=5, N=7, \beta =1.5 $.}
	\label{converge}
\end{figure}

Fig. \ref{converge} shows the convergence behavior of the proposed GA approach. It can be seen from the figure that the mean objective value of the population converges within average 8 generations. The computational complexity is $ N_gN_p= 8 \times 50 = 400 $. While the computational complexity of the exhaustive search is $ \binom{50}{5}^7= 1.92\times 10^{44}$, which is not feasible in practice. As stated earlier, the popularity of the content files will remain the same for a relative long period, so the convergence behavior of the caching placement algorithm is not time-critical. Unlike the delivery stage, which needs to make an instant decision for coping with the dynamics of mobile networking systems, e.g., the rapid change of channel state information, it is not necessary for the caching strategy to make an instant decision due to the slow change of the statistics data (e.g., the request probabilities of the files). In addition, if parallel computing is adopted, the population size of the proposed GA approach can be increased without introducing additional execution time, while the converging speed of the GA will be accelerated. Thus, the GA approach can perform well with a satisfying converging speed in practice.

\section{Conclusion} \label{conclusion}

In this paper, we have investigated tradeoff caching strategy in Cloud-RAN for future mobile communications. In order to jointly minimize the cell average outage probability and fronthaul usage, the optimization problem is formulated as a weighted sum of the two objectives, with weighting factor $ \eta $ (and $ 1-\eta $). Analytical expressions of cell average outage probability and fronthaul usage have been presented and verified through simulations. Performances of two particular caching strategies have been analyzed, namely the MPC and the LB-LCD schemes. When the minimization of the cell average outage probability is more focused on, the MPC scheme is superior to the LB-LCD scheme, while the latter is superior to the former in the opposite situation, i.e., where the reduction of average fronthaul usage is more focused on. When the content files' popularity skewness factor $ \beta $ is larger, or the cache size of each RRH increase, the MPC scheme will dominate in a wide range of $ \eta $. Two heuristic approaches have been proposed to solve the joint optimization problem: one is the GA based approach which can achieve nearly the same optimal performance of exhaustive search, while the computational complexity is significantly reduced; the other is the mode selection approach with extremely low computational complexity, which can obtain near-optimal performance within a wide range of $ \eta $. Compared with a typical probabilistic caching scheme, the proposed GA approach can reduce the objective function value by up to 45.7\% on average and the proposed two mode selection caching strategy can provide an average improvement of 36.9\%. In practice, the RAN can make a decision of the tradeoff according to the system's statistics of fronthaul traffic and outage probability, and then adopt caching strategy through the proposed schemes.

\appendices 
\section{\label{A1}  Derivations of (\ref{the_pdf_snr}) and (\ref{the_cdf_snr}) }
\numberwithin{equation}{section}
\setcounter{equation}{0}

For a specific file $ F_l $, the subscript of file index $ l $ and the user's location $ x_0 $ are omitted without ambiguity. In (\ref{gamma_n}), $ |h_n|^2\sim\chi^2(2) $, and the PDF is given by \cite{digital_communication}
\begin{equation}
f_{|h_n|^2}(x) = \exp(-x),~x>0.
\end{equation}
Then the PDF of $ \gamma_n = \gamma_0S_n|h_n|^2 $ is
\begin{equation}
\label{gamma_pdf}
f_{\gamma_n} (\gamma)=\dfrac{1}{\gamma_0 S_n}\exp\left( {-\dfrac{\gamma}{\gamma_0 S_n}} \right),~\gamma >0,~n\in\Phi .
\end{equation}
The moment generation function (MGF) \cite{over_fading} of the random variable $\gamma_n$ is
\begin{equation}
\begin{aligned}
\label{the_MGF}
&M_{\gamma_n}(s)&=&\int_{0}^{\infty}f_{\gamma_n}(\gamma)e^{s\gamma}d\gamma&\\
& &=&\int_{0}^{\infty}\frac{1}{\gamma_0S_n}\exp\left(-\frac{\gamma}{\gamma_0S_n}\right)e^{s\gamma}d\gamma&\\
& &=&\frac{1}{1-\gamma_0S_n\cdot{s}}~,&
\end{aligned}
\end{equation}
and the range of convergence (ROC) is $ \operatorname{Re}\left(s\right)<\frac{1}{\gamma_0S_n} $.
Since the RRHs are distributed at different locations, $ \{\gamma_n, n\in\Phi\} $ is independent of each other, the MGF of received SNR $\gamma = \sum_{n\in \Phi}^{}\gamma_n $ is given by
\begin{equation}
\label{the_MGF_gamma}
M_{\gamma}(s) = \prod_{n\in \Phi}^{} M_{\gamma_n}(s)= \prod_{n\in \Phi}^{}\frac{1}{1-\gamma_0S_n\cdot{s}}~,\\
\end{equation}
and the ROC is $ {\bigcap}_{n\in\Phi}\operatorname{Re}\left(s\right)<\frac{1}{\gamma_0S_n} $.

Since there are $ I $ distinct distances $ d_1\neq d_2\neq \cdots \neq d_i\neq \cdots \neq d_I $ between the service RRHs and the user, and the $ i $-th distance has multiplicity of $ J_i $, (\ref{the_MGF_gamma}) can be rewritten as
\begin{equation}
M_{\gamma}(s)=\dfrac{1}{\left(1-\dfrac{1}{\lambda_1} s\right)^{J_1}\left(1-\dfrac{1}{\lambda_2} s\right)^{J_2}\cdots\left(1-\dfrac{1}{\lambda_I} s\right)^{J_I}}~,\\
\end{equation}
where $ \lambda_i = \frac{1}{\gamma_0Kd_i^{-\alpha}}, ~i\in \{1,2,\cdots,I\} $ is the $ i $-th pole of multiplicity $ J_i $ of $ M_{\gamma}(s) $, using partial fraction expansion, $ M_{\gamma}(s) $ can be expressed as

\begin{equation}
\label{MGF_a_nm}
M_{\gamma}(s)=\sum_{i=1}^{I}\sum_{j=1}^{J_i}{\dfrac{A_{ij}}{\left(1-\dfrac{1}{\lambda_i} s\right)^j}}~,
\end{equation}
where $ \{A_{ij}\} $ are the undetermined coefficients. Multiplying $ ( 1-\frac{1}{\lambda_i } s ) ^{J_i} $ to both sides of (\ref{MGF_a_nm}), then calculating the $ (J_i-j) $-th order derivate for both sides and let $ s=\lambda_i $, we have
\label{A_nm_caculate}
\begin{equation}
\begin{aligned}
&\dfrac{d^{J_i-j}}{ds^{J_i-j}}\left[ M_{\gamma}(s)\left( 1-\dfrac{1}{\lambda_i}s \right) ^{J_i}\right] \Bigg\vert_{s=\lambda_i} \\ 
=&\dfrac{d^{J_i-j}}{ds^{J_i-j}}\left. \left[ \sum_{i=1}^{I}\sum_{j=1}^{J_i}{\dfrac{A_{ij}}{\left(1-\dfrac{1}{\lambda_i}\cdot s\right)^j}}\left( 1-\dfrac{1}{\lambda_i}s \right) ^{J_i}\right] \right|_{s=\lambda_i} \\
=&(J_i-j)!\left(-\dfrac{1}{\lambda_i}\right)^{J_i-j}A_{ij}~.
\end{aligned}
\end{equation}
Thus $ A_{ij} $ is obtained as (\ref{the_A_nm}). 

The PDF of $ \gamma $ can be obtained by inversely transforming the MGF in (\ref{MGF_a_nm}).
Considering a general form of the PDF,
\begin{equation}
f(\gamma)=\gamma^ne^{-a\gamma},\quad \gamma\geq 0,
\end{equation}
where $ n\in \{0\}\cup\mathbb{Z}^+$, $ a\in \mathbb{R}^+ $. The MGF of $ f(\gamma) $ can be obtained by continuously using the method of integration by parts.
\begin{eqnarray}
& &M(s) \nonumber \\
&=&\int_{0}^{\infty}\gamma^ne^{-a\gamma}e^{s\gamma}d\gamma \nonumber \\
&=&-\dfrac{1}{a-s}\int_{0}^{\infty}\gamma^nde^{-(a-s)\gamma} \nonumber \\
&=&-\dfrac{1}{a-s}\left(\gamma^ne^{-(a-s)\gamma}\Big|_0^\infty -n\int_{0}^{\infty}e^{-(a-s)\gamma}\gamma^{n-1}d\gamma\right) \nonumber \\
& &\hspace{3cm}\vdots \nonumber \\
&=&\dfrac{n!}{(a-s)^{n+1}}~,
\end{eqnarray}
and the ROC is $\operatorname{Re}\left(s\right)<a $. Denote the pair of the PDF and its corresponding MGF as 
\begin{equation}
\label{pdf_mgf_pair}
f(\gamma)=\gamma^ne^{-a\gamma}\Longleftrightarrow M(s)=\dfrac{n!}{(a-s)^{n+1}}~.
\end{equation}
The CDF can be calculated in the same manner,
\begin{equation}
\label{cdf_pdf_pair}
\begin{aligned}
F(\gamma)&=\int_{0}^{\gamma}f(\gamma)d\gamma\\
&=\int_{0}^{\gamma}\gamma^n e^{-a\gamma}d\gamma\\
&=\dfrac{1}{a}\left[\dfrac{n!}{a^n}-\left(e^{-a\gamma}\sum_{k=0}^{n}\dfrac{n!}{(n-k)!a^k}\gamma^{n-k}\right)\right]~.\\
\end{aligned}
\end{equation}

According to (\ref{MGF_a_nm}) and (\ref{pdf_mgf_pair}), the PDF of the received SNR is obtained, as shown in (\ref{the_pdf_snr}). According to (\ref{cdf_pdf_pair}), the CDF of the received SNR is obtained as shown in (\ref{the_cdf_snr}).

\section{\label{A3} Proof of Proposition \ref{prop1} }

Without loss of generality, it is assumed that $ M_n = M, ~\forall n\in\mathcal{N},~|\mathcal{N}|>1 $. According to (\ref{p1}), it is obvious that the objective functions of the MPC and LCD schemes $ f_{obj}^{MPC} $ and $ f_{obj}^{LCD} $ are linearly (thus monotonic) continuous function of $ \eta $ on closed interval $ [0,1] $. 

When $ \eta = 0 $, $ f_{obj} =  \sum\nolimits_{l=1}^{L}P_lT_l  $. The objective values of the two schemes are
\begin{equation}
\begin{aligned}
f_{obj}^{MPC} &=&& \sum\limits_{l=1}^{M}P_l \underset{\substack{\uparrow \\ 0}}{{T_l}} &+& \sum\limits_{l=M+1}^{L}P_l\underset{\substack{\uparrow \\ 1}}{{T_l}} &=&\sum\limits_{l=M+1}^{L}P_l ~,\\
f_{obj}^{LCD}  &=&& \sum\limits_{l=1}^{NM}P_l\underset{\substack{\uparrow \\ 0}}{{T_l}} &+& \sum\limits_{l=NM+1}^{L}P_l\underset{\substack{\uparrow \\ 1}}{{T_l}} &=& \sum\limits_{l=NM+1}^{L}P_l~.
\end{aligned}
\end{equation}
Note that $ \sum\nolimits_{l=1}^{L}P_l=1 $ and $ P_1\geqslant P_2\geqslant\cdots\geqslant P_L $, where equality holds if and only if  $ \beta=0 $. Thus
\begin{equation}
\label{eta0}
f_{obj}^{MPC}\Big|_{\eta=0} > f_{obj}^{LCD}\Big|_{\eta=0} ~.
\end{equation}

When $ \eta = 1 $, $ f_{obj} =  \sum\limits_{l=1}^{L}P_l \mathbb{E}_{x_0}\left[ P_{out}^{(l)}(x_0) \right] $. Denoting $ \mathbb{E}_{x_0} \left[ P_{out}^{(l)}(x_0) \right]  $ as $ P_{cell,out}(l) $, then
\begin{equation}
\begin{aligned}
f_{obj}^{MPC} = & \underbrace{\sum\limits_{l=1}^{NM}P_lP^{MPC}_{cell,out}(l)}_C &+&\underbrace{\sum\limits_{l=NM+1}^{L}P_lP^{MPC}_{cell,out}(l)}_D~, \\ 
f_{obj}^{LCD} = & \underbrace{\sum\limits_{l=1}^{NM}P_lP^{LCD}_{cell,out}(l)}_E &+&\underbrace{\sum\limits_{l=NM+1}^{L}P_lP^{LCD}_{cell,out}(l)}_F~.
\end{aligned}
\end{equation}
According to the wireless transmission strategy, $ D = F $, where $ D $ and $ F $ correspond to the scenario that all the RRHs serve the user, while $ E > C $ because $ E $ denotes there is only one RRH serving the user, while $ C $ corresponds to all the RRHs serving the user. Thus
\begin{equation}
\label{eta1}
f_{obj}^{MPC}\Big|_{\eta=1} < f_{obj}^{LCD}\Big|_{\eta=1} ~.
\end{equation}
According to (\ref{eta0}), (\ref{eta1}) and the linearity of $ f^{MPC}_{obj} $ and $ f^{LCD}_{obj} $, there exists a crossover point $ \eta_0 \in [0,1]$ of the two objective functions. When $ \eta < \eta_0 $, the LCD scheme is superior to the MPC scheme, while when $ \eta>\eta_0 $, the MPC scheme is superior to the LCD scheme. 

Substituting $ \{a_{l,n}\} $ of the MPC and LCD schemes into (\ref{p1}), respectively, a linear equation of $ \eta $ is formulated, and the solution is shown as in (\ref{eta_caculation}). Because $ M = M_n,~\forall n $, (\ref{eta_caculation}) can be further written as (\ref{eta_caculation_2}).

The proof can be extended to the case that $ M_n $ is different with $ n $.


\ifCLASSOPTIONcaptionsoff
  \newpage
\fi

\bibliographystyle{IEEEtran}

\bibliography{reference_TVT}

%
%

\end{document}